\documentclass[journal,twoside,web]{ieeecolor}
\usepackage{generic}
\usepackage{cite}
\usepackage{amsmath,amssymb,amsfonts}
\usepackage{algorithmic}
\usepackage{graphicx}
\usepackage{textcomp}
\usepackage{tabularx}
\usepackage{xcolor}
\usepackage{soul,dsfont} 
\usepackage{tikz} 
\usepackage{color}
\usepackage{algorithm} 
\usepackage{array}
\usepackage{eqparbox}
\usepackage{url}
\usepackage{relsize}
\usepackage{stfloats}
\usepackage{graphics}
\usepackage[mathscr]{euscript}
\graphicspath{{figures/}{./}}

\newcommand{\vsp}{\vspace{0.075in}}

\newtheorem{proposition}{Proposition}

\newtheorem{definition}{Definition}
\newtheorem{corollary}{Corollary}
\newtheorem{remark}{Remark}

\def\BibTeX{{\rm B\kern-.05em{\sc i\kern-.025em b}\kern-.08em
    T\kern-.1667em\lower.7ex\hbox{E}\kern-.125emX}}
\markboth{\hskip25pc }
{}

\begin{document}

\bstctlcite{}
\title{Data-driven Control of T-Product-based Dynamical Systems}
\author{Ziqin He, Yidan Mei, Shenghan Mei, Xin Mao, Anqi Dong, Ren Wang, and Can Chen
\thanks{Ziqin He and Yidan Mei contributed equally to this work.}
\thanks{Ziqin He, Yidan Mei, and Shenghan Mei are with the Department of Mathematics, University of North Carolina at Chapel Hill, Chapel Hill, NC 27599, USA (email: zhe21@unc.edu; ymei@unc.edu; shmei@unc.edu).}
\thanks{Xin Mao is with the School of Data Science and Society, University of North Carolina at Chapel Hill, Chapel Hill, NC 27599, USA (email: xinm@unc.edu).}
\thanks{Anqi Dong is with the Division of Decision and Control Systems and Department of Mathematics, KTH Royal Institute of Technology, SE-100 44 Stockholm, Sweden (email: anqid@kth.se).}
\thanks{Ren Wang is with the Department of Electrical and Computer Engineering, Illinois Institute of Technology, Chicago, IL 60616, USA (e-mail: rwang74@iit.edu).}
\thanks{Can Chen is with the School of Data Science and Society and the Department of Mathematics, University of North Carolina at Chapel Hill, Chapel Hill, NC 27599, USA (email:
canc@unc.edu).}
}

\maketitle

\begin{abstract}
Data-driven control is a powerful tool that enables the design and implementation of control strategies directly from data without explicitly identifying the underlying system dynamics. While various data-driven control techniques, such as stabilization, linear quadratic regulation, and model predictive control, have been extensively developed, these methods are not inherently suited for multi-linear dynamical systems, where the states are represented as higher-order tensors. In this article, we propose a novel framework for data-driven control of T-product-based dynamical systems (TPDSs), where the system evolution is governed by the T-product between a third-order dynamic tensor and a third-order state tensor. In particular, we offer necessary and sufficient conditions to determine the data informativity for system identification, stabilization by state feedback, and T-product quadratic regulation of TPDSs with detailed complexity analyses. Finally, we validate our framework through numerical examples. 
\end{abstract}

\begin{IEEEkeywords}
Data-driven control, computational methods, large-scale systems. 
\end{IEEEkeywords}

\section{Introduction}\label{sec:intro}
Data-driven control is an emerging field that focuses on designing control strategies directly from data, bypassing the need for explicit system identification \cite{hou2013model,tang2022data,de2019formulas}. This approach is particularly valuable in situations where obtaining a detailed mathematical model of the system is challenging or time-consuming, yet large amounts of data can be easily gathered. Data-driven control has been applied across various industries, including robotics \cite{bruder2020data}, aerospace \cite{brunton2021data}, automotive systems \cite{bach2017data}, energy management \cite{zhou2016big}, and manufacturing processes \cite{frazzon2018data}. By adapting classical control techniques such as stabilization \cite{de2019formulas}, linear quadratic regulation \cite{markovsky2007linear}, and model predictive control \cite{rosolia2018data} to the data-driven framework, it enables more efficient and scalable solutions for complex systems. However, many data-driven control methods rely on the fundamental lemma (also known as persistency of excitation) \cite{willems2005note}, which assumes that the underlying dynamics can be uniquely identified from the available data. This assumption may present limitations in systems where the data is noisy, incomplete, or insufficient.

Recently, Van Waarde et al. \cite{van2020data} introduced a data-driven analysis and control framework to assess data informativity of linear time-invariant systems, particularly in cases where the available data is insufficient to uniquely identify the underlying system dynamics. The authors established necessary and sufficient conditions for data informativity in the context of system identification, stability, controllability, stabilizability, stabilization by state feedback, and linear quadratic regulation. However, a significant limitation of this framework is that it is specifically designed for linear systems, making it difficult to extend the methodology to more complex, multi-linear dynamical systems in which the states are represented as high-dimensional, higher-order tensors \cite{chen2021multilinear,rogers2013multilinear,chen2022explicit}. Various real-world systems, including those in image and video processing, biological networks, and engineering, exhibit complex, multi-dimensional relationships where states are naturally represented as third-order or higher-order tensors \cite{yahyanejad2019survey,bengua2017efficient,danielson2018vectors,itskov2007tensor,chen2021controllability}.


In this article, we focus on a specific class of multi-linear systems known as T-product-based dynamical systems (TPDSs), where system evolution is governed by the T-product between a third-order dynamic tensor and a third-order state tensor. This concept was first introduced by Hoover et al. \cite{hoover2021new} as a natural extension of linear time-invariant systems, providing a powerful framework for modeling complex interactions in three-dimensional data, such as images and videos. System-theoretic analysis of TPDSs, including stability, controllability, observability, as well as optimal control design techniques such as state feedback, has been developed to enhance the understanding of their behavior and facilitate the application of control strategies \cite{hoover2021new,chen2024tensor}. However, despite the growing interest in data-driven methods within control research, the development of data-driven control approaches tailored for TPDSs remains largely unexplored.

Building upon our recent work on data-driven analysis of TPDSs \cite{mao2024data}, this article extends the framework by generalizing data informativity conditions for optimal control design. Data-driven control of TPDSs has a wide-ranging of applications in image and video processing, like image/video denoising, compression, and restoration, where the goal is to manipulate the system's state to achieve a desired output (e.g., reducing noise while preserving important visual details) \cite{zheng2023approximation,yin2018multiview,tarzanagh2018fast}. 
By leveraging the unique properties of the T-product structure, we propose effective and efficient data-driven control techniques, including stabilization by state feedback and T-product quadratic regulation. Additionally, we demonstrate how T-product-based conditions offer significant computational advantages over methods that unfold TPDSs into linear systems and apply linear data informativity techniques.

This article is organized into five sections. Section \ref{sec:prelim} provides an overview of the T-product and associated algebra. In Section \ref{sec:methods}, we establish the necessary and sufficient conditions for determining the data informativity for system identification, stabilization by state feedback, and T-product quadratic regulation of TPDSs. Section \ref{sec:examples} presents numerical examples to illustrate the proposed methods. Finally, we conclude with a discussion of future directions in Section \ref{sec:conclusion}.

\section{Preliminaries}\label{sec:prelim}
Tensors, which generalize the concepts of vectors and matrices to higher-dimensional structures, are multidimensional arrays \cite{ kolda2009tensor, ragnarsson2012block, chen2021multilinear, chen2024tensor}. The order of a tensor is defined as the number of dimensions, with each dimension referred to as a mode. In this work, we focus specifically on third-order tensors, which are typically denoted as $\mathscr{T} \in \mathbb{R}^{n \times m \times r}$. The T-product extends the concept of matrix multiplication to third-order tensors by incorporating circular convolution, enabling efficient tensor operations while preserving structural properties \cite{kilmer2013third, kilmer2011factorization, zhang2016exact}.

\begin{definition}[T-Product]
The T-product between two third-order tensor $\mathscr{T}\in\mathbb{R}^{n\times m\times r}$ and $\mathscr{S}\in\mathbb{R}^{m\times h\times r}$ is defined as 
\begin{equation}
\mathscr{T}\star\mathscr{S} = \phi^{-1}\big(\psi(\mathscr{T})\phi(\mathscr{S})\big) \in\mathbb{R}^{n\times h\times r},
\end{equation}
where  $\psi(\cdot)$ and $\phi(\cdot)$ denote the block circulant and unfold operators defined as
\begin{align*}\label{eq:bcirc}
\psi(\mathscr{T})&= \begin{bmatrix}
\mathscr{T}_{::1} & \mathscr{T}_{::r} & \cdots & \mathscr{T}_{::2}\\
\mathscr{T}_{::2} & \mathscr{T}_{::1} & \cdots & \mathscr{T}_{::3}\\
\vdots & \vdots & \ddots & \vdots\\
\mathscr{T}_{::r} & \mathscr{T}_{::(r-1)} & \cdots & \mathscr{T}_{::1}
\end{bmatrix}\in\mathbb{R}^{nr\times mr},\\
\phi(\mathscr{S}) &= \begin{bmatrix}
\mathscr{S}_{::1}^\top &
\mathscr{S}_{::2}^\top&
\cdots&
\mathscr{S}_{::r}^\top
\end{bmatrix}^\top\in\mathbb{R}^{mr\times h},
\end{align*}
respectively. Note that $\mathscr{T}_{::j}$ are referred to as the frontal slices, i.e.,  matrices obtained by fixing the index of the third mode while allowing the first two modes to vary.
\end{definition}

Many fundamental matrix operations, such as the identity, transpose, inverse,  orthogonality, and positive definiteness  can be naturally extended to third-order tensors within the T-product framework:
(i) The T-identity tensor, denoted as $\mathscr{I}\in\mathbb{R}^{n\times n\times r}$, is defined such that its first frontal slice ($\mathscr{I}_{::1}$) is the identity matrix, while all other frontal slices are zero matrices.
(ii) The T-inverse of a  tensor $\mathscr{T} \in \mathbb{R}^{n \times n \times r}$, denoted by $\mathscr{T}^{-1}$, satisfies $\mathscr{T} \star \mathscr{T}^{-1} = \mathscr{T}^{-1} \star \mathscr{T} = \mathscr{I}$. Analogously, left and right T-inverses can be defined similarly;
(iii) The T-transpose of a tensor $\mathscr{T} \in \mathbb{R}^{n \times m \times r}$, denoted by $\mathscr{T}^\top$, is obtained by transposing each frontal slice of $\mathscr{T}$ and then reversing the order of the transposed slices from the second to the $r$th slice;
(iv) A  tensor $\mathscr{T} \in \mathbb{R}^{n \times n \times r}$ is called T-orthogonal if it satisfies $\mathscr{T} \star \mathscr{T}^\top = \mathscr{T}^\top \star \mathscr{T} = \mathscr{I}$;
(v) A  tensor $\mathscr{T} \in \mathbb{R}^{n \times n \times r}$ is called  T-positive definite if  it satisfies 
$
\mathscr{X}^\top \star \mathscr{T} \star \mathscr{X} > 0$ for all $\mathscr{X}\in\mathbb{R}^{n\times 1\times r}$.
All operations outlined above can be computed or verified using  $\psi(\cdot)$. For example, a tensor $\mathscr{T} \in \mathbb{R}^{n \times n \times r}$ is T-positive definite if and only if $\psi(\mathscr{T})\in\mathbb{R}^{nr\times nr}$ is positive definite. For convenience, we use the same  notation for  matrix and T-product operations (e.g., transpose, inverse, etc.).

Block tensors of third-order tensors can be defined analogously to block matrices by omitting the third mode. Given two tensors $\mathscr{T}\in\mathbb{R}^{n\times m\times r}$ and $\mathscr{S}\in\mathbb{R}^{n\times h\times r}$, the row block tensor, denoted as $[\mathscr{T} \text{ }\mathscr{S}]\in\mathbb{R}^{n\times (m+h)\times r}$, is formed by concatenating $\mathscr{T}$ and $\mathscr{S}$ along the second mode. On the other other, given two tensors $\mathscr{T}\in\mathbb{R}^{n\times m\times r}$ and $\mathscr{S}\in\mathbb{R}^{h\times m\times r}$, the row block tensor, denoted as $[\mathscr{T} \text{ }\mathscr{S}]^\top\in\mathbb{R}^{(n+h)\times m\times r}$, is formed by concatenating $\mathscr{T}$ and $\mathscr{S}$ along the first mode.

More significantly, eigenvalue decomposition (EVD) and singular value decomposition (SVD) can be generalized to third-order tensors within the T-product framework, known as T-EVD and T-SVD, respectively \cite{kilmer2013third, braman2010third}.

\begin{definition}[T-EVD]
The T-EVD of a third-order tensor $\mathscr{T} \in \mathbb{R}^{n \times n \times r}$ is defined as
\begin{equation}
\mathscr{T} = \mathscr{U} \star \mathscr{D} \star \mathscr{U}^{-1},
\end{equation}
where $\mathscr{U} \in \mathbb{C}^{n \times n \times r}$ and $\mathscr{D} \in \mathbb{C}^{n \times n \times r}$ is F-diagonal, meaning each frontal slice of $\mathscr{D}$ is a diagonal matrix. The tubes of $\mathscr{D}$, denoted as $\mathscr{D}_{jj:} \in \mathbb{C}^{r}$, correspond to the eigentuples of $\mathscr{T}$.
\end{definition}

T-EVD  can be computed by leveraging the discrete Fourier transform and EVD. Given a  tensor $\mathscr{T} \in \mathbb{R}^{n \times n \times r}$, we first apply the discrete Fourier transform to  $\psi(\mathscr{T})$, resulting in
\begin{align*}
\mathcal{F}\{\psi(\mathscr{T})\} =&(\textbf{F}_n \otimes \textbf{I}_r)  \psi(\mathscr{T})  (\textbf{F}_n^* \otimes \textbf{I}_r)\\ 
=& \texttt{blkdiag}(\textbf{T}_1, \dots, \textbf{T}_r),
\end{align*}
where  $\texttt{blkdiag}$ is the MATLAB block diagonal function, $\otimes$ represents the Kronecker product, the superscript $*$ denotes the conjugate transpose, $\textbf{I}_r \in \mathbb{R}^{r \times r}$ is the identity matrix, and $\textbf{F}_n \in \mathbb{C}^{n \times n}$ is the discrete Fourier transform matrix 
\begin{equation*}
\textbf{F}_n = \frac{1}{\sqrt{n}} \begin{bmatrix}
1 & 1 & 1 & \cdots & 1 \\
1 & \omega & \omega^2 & \cdots & \omega^{n-1} \\
\vdots & \vdots & \vdots & \ddots & \vdots \\
1 & \omega^{n-1} & \omega^{2(n-1)} & \cdots & \omega^{(n-1)^2}
\end{bmatrix}
\end{equation*}
with $\omega = \exp{\left(\frac{-2\pi i}{n}\right)}$. Next, we compute the EVD of each $\textbf{T}_j\in\mathbb{R}^{n\times n}$, i.e., $\textbf{T}_j=\textbf{U}_j\textbf{D}_j\textbf{U}_j^{-1}$, for $j=1,2,\dots,r$. Finally, the factor tensor $\mathscr{U}$  in the T-EVD of $\mathscr{T}$ is constructed as
\begin{equation*}
    \mathscr{U} =\psi^{-1}\big((\textbf{F}_n^* \otimes \textbf{I}_r) \, \texttt{blkdiag}(\textbf{U}_1,\textbf{U}_2, \dots, \textbf{U}_r) \, (\textbf{F}_n \otimes \textbf{I}_r)\big).
\end{equation*}
The F-diagonal tensor $\mathscr{D}$ is built similarly. Furthermore, T-SVD can be defined and computed analogously.
\begin{definition}[T-SVD]
The T-SVD of  a third-order tensor $\mathscr{T}\in\mathbb{R}^{n\times m\times r}$ is defined as
\begin{equation}
\mathscr{T}=\mathscr{U}\star\mathscr{S}\star\mathscr{V}^{\top},
\end{equation}
where $\mathscr{U}\in\mathbb{R}^{n\times n\times r}$ and $\mathscr{V}\in\mathbb{R}^{m\times m\times r}$ are T-orthogonal, and $\mathscr{S}\in\mathbb{R}^{n\times m\times r}$ is a F-(rectangle) diagonal tensor such that $\mathscr{S}_{jj:}\in\mathbb{R}^{r}$ are referred to the singular tuples of $\mathscr{T}$.
\end{definition}

It is important to note that the T-EVD or T-SVD of $\mathscr{T}$ is not equivalent to the EVD or SVD of $\psi(\mathscr{T})$ because the unfolding matrix $\psi(\mathscr{D})$ or $\psi(\mathscr{S})$ is not diagonal. Nevertheless, both T-EVD and T-SVD play crucial roles in  data-driven control of TPDSs by enabling efficient computations.

\section{Data-driven Control of TPDSs}\label{sec:methods}
We consider a discrete-time controlled TPDS represented as
\begin{equation}\label{eq:tpdsc}
\mathscr{X}(t+1) = \mathscr{A} \star \mathscr{X}(t) + \mathscr{B} \star \mathscr{U}(t),
\end{equation}
where $\mathscr{A} \in \mathbb{R}^{n \times n \times r}$ is the state transition tensor, $\mathscr{B} \in \mathbb{R}^{n \times m \times r}$ is the control tensor, $\mathscr{X}(t) \in \mathbb{R}^{n \times h \times r}$ represents the state, and $\mathscr{U}(t) \in \mathbb{R}^{m \times h \times r}$ denotes the control input. This system can be reformulated into a linear representation using the block circulant operation, i.e., 
\begin{equation}\label{eq:unfold}
\begin{split}
    \psi(\mathscr{X}(t+1)) = \psi(\mathscr{A})\psi(\mathscr{X}(t)) + \psi(\mathscr{B}) \psi(\mathscr{U}(t)).
\end{split}
\end{equation}
This linear representation enables the analysis of data-driven control properties for the controlled TPDS \eqref{eq:tpdsc}. However, it overlooks the intrinsic structure of block circulant matrices and the potential of advanced tensor algebra techniques, such as T-EVD and T-SVD, which can be exploited to enhance computational efficiency through the Fourier transform.

Let the state and control data tensors be constructed as follows:
\begin{align*}
\mathscr{Y} &= 
\begin{bmatrix}
\mathscr{X}(0) & \mathscr{X}(1) & \cdots & \mathscr{X}(l-1)
\end{bmatrix}\in\mathbb{R}^{n\times lh\times r},\\
\mathscr{Z} &= 
\begin{bmatrix}
\mathscr{X}(1) & \mathscr{X}(2) & \cdots & \mathscr{X}(l)
\end{bmatrix}\in\mathbb{R}^{n\times lh\times r},\\
\mathscr{V} &= 
\begin{bmatrix}
\mathscr{U}(0) & \mathscr{U}(1) & \cdots & \mathscr{U}(l-1)
\end{bmatrix}\in\mathbb{R}^{m\times lh\times r}.
\end{align*}
For simplicity, we define the  diagonal block matrices  of $\mathscr{A}$, $\mathscr{B}$, $\mathscr{Y}$, $\mathscr{Z}$, and $\mathscr{V}$ in the Fourier domain as $\textbf{A}_j$, $\textbf{B}_j$ $\textbf{Y}_j$, $\textbf{Z}_j$, and $\textbf{V}_j$, respectively, for $j=1,2,\dots, r$, i.e.,
\begin{align*}
    \mathcal{F}\{\psi(\mathscr{A})\} &= \texttt{blkdiag}(\textbf{A}_1,\textbf{A}_2,\dots,\textbf{A}_r),\\
    \mathcal{F}\{\psi(\mathscr{B})\} &= \texttt{blkdiag}(\textbf{B}_1,\textbf{B}_2,\dots,\textbf{B}_r),\\
    \mathcal{F}\{\psi(\mathscr{Y})\} &= \texttt{blkdiag}(\textbf{Y}_1,\textbf{Y}_2,\dots,\textbf{Y}_r),\\
    \mathcal{F}\{\psi(\mathscr{Z})\} &= \texttt{blkdiag}(\textbf{Z}_1,\textbf{Z}_2,\dots,\textbf{Z}_r),\\
    \mathcal{F}\{\psi(\mathscr{V})\}& = \texttt{blkdiag}(\textbf{V}_1,\textbf{V}_2,\dots,\textbf{V}_r).
\end{align*}
In the following, we examine the data informativity of the controlled TPDS \eqref{eq:tpdsc} in the contexts of system identification, stabilization by state feedback, and T-product quadratic regulation. The efficiency of T-product-based computations in formulating these conditions is demonstrated through detailed complexity analyses.

\subsection{System Identification}
System identification is a foundational step in data-driven control as it involves constructing accurate mathematical models of dynamic systems based on observed  data. For TPDSs, it has significant implications for video dynamics, enabling the extraction of valuable information, such as foregrounds and backgrounds, from video data \cite{mao2024data,erichson2019compressed,kutz2017dynamic}. Consequently, it is essential to determine whether the underlying controlled TPDS can be uniquely identified from the available data.

\begin{definition}[System Identification]
    We say the  data $(\mathscr{V},\mathscr{Y},\mathscr{Z})$ are informative for system identification if the pair $(\mathscr{A},\mathscr{B})$ can be uniquely identified. 
\end{definition}

\begin{proposition}\label{prop:sysid}
    The data $(\mathscr{V},\mathscr{Y},\mathscr{Z})$ are informative for system identification if and only if all entries of the singular tuples of the block tensor $[\mathscr{Y} \text{ } \mathscr{V}]^\top\in\mathbb{R}^{(n+m)\times lh\times r}$ are nonzero. 
\end{proposition}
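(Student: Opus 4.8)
The plan is to translate the identification question into the Fourier domain via the block circulant operator and then reduce it to a standard rank condition. Starting from the data equation $\mathscr{Z} = \mathscr{A}\star\mathscr{Y} + \mathscr{B}\star\mathscr{V}$, applying $\psi(\cdot)$ gives the matrix relation $\psi(\mathscr{Z}) = \psi(\mathscr{A})\psi(\mathscr{Y}) + \psi(\mathscr{B})\psi(\mathscr{V}) = [\psi(\mathscr{A})\ \psi(\mathscr{B})]\,[\psi(\mathscr{Y})^\top\ \psi(\mathscr{V})^\top]^\top$, and then conjugating by $(\textbf{F}_n\otimes\textbf{I}_r)$ block-diagonalizes everything: for each $j$ we get $\textbf{Z}_j = \textbf{A}_j\textbf{Y}_j + \textbf{B}_j\textbf{V}_j = [\textbf{A}_j\ \textbf{B}_j]\begin{bmatrix}\textbf{Y}_j\\\textbf{V}_j\end{bmatrix}$. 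Since the Fourier-domain representation is a bijection between $(\mathscr{A},\mathscr{B})$ and the tuple $(\textbf{A}_1,\textbf{B}_1,\dots,\textbf{A}_r,\textbf{B}_r)$, the pair $(\mathscr{A},\mathscr{B})$ is uniquely determined by the data if and only if each pair $(\textbf{A}_j,\textbf{B}_j)$ is uniquely determined.

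Next I would invoke the elementary linear-algebra fact that $[\textbf{A}_j\ \textbf{B}_j]$ is the unique solution of $\textbf{Z}_j = [\textbf{A}_j\ \textbf{B}_j]\begin{bmatrix}\textbf{Y}_j\\\textbf{V}_j\end{bmatrix}$ if and only if $\begin{bmatrix}\textbf{Y}_j\\\textbf{V}_j\end{bmatrix}$ has full row rank $n+m$ (nonuniqueness would come from adding any matrix whose rows lie in the left null space). Hence the data are informative for system identification if and only if $\begin{bmatrix}\textbf{Y}_j\\\textbf{V}_j\end{bmatrix}$ has full row rank for every $j=1,\dots,r$. It remains to rewrite this per-slice full-rank condition as a statement about the singular tuples of the block tensor $\mathscr{W}:=[\mathscr{Y}\ \mathscr{V}]^\top\in\mathbb{R}^{(n+m)\times lh\times r}$. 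By the construction of the T-SVD via the Fourier transform, the diagonal blocks of $\mathcal{F}\{\psi(\mathscr{W})\}$ are exactly $\textbf{W}_j = \begin{bmatrix}\textbf{Y}_j\\\textbf{V}_j\end{bmatrix}$ (the Fourier transform commutes with stacking along the first mode), and the singular values of $\textbf{W}_j$ are precisely the $j$-th Fourier components of the singular tuples $\mathscr{S}_{kk:}$ of $\mathscr{W}$. Therefore $\textbf{W}_j$ has full row rank for all $j$ if and only if none of these Fourier components vanishes, i.e., if and only if all entries of all singular tuples of $[\mathscr{Y}\ \mathscr{V}]^\top$ are nonzero.

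The main obstacle is making the last equivalence airtight: one must carefully argue that the singular values of $\textbf{W}_j$ coincide with the $j$-th entries (in Fourier order) of the singular tuples of $\mathscr{W}$, and that ``$\textbf{W}_j$ full row rank'' is the right reformulation even when $lh$ and $n+m$ differ. This follows from the T-SVD computation recipe: $\mathscr{S}$ is F-diagonal, so $\mathcal{F}\{\psi(\mathscr{S})\}$ is block-diagonal with the $j$-th block being $\textbf{S}_j=\mathrm{diag}$ of the $j$-th Fourier coordinates of the tubes $\mathscr{S}_{kk:}$; since $\mathscr{W}=\mathscr{U}\star\mathscr{S}\star\mathscr{V}^\top$ with $\mathscr{U},\mathscr{V}$ T-orthogonal, Fourier-domain block-diagonalization yields $\textbf{W}_j=\textbf{U}_j\textbf{S}_j\textbf{V}_j^*$ with $\textbf{U}_j,\textbf{V}_j$ unitary, so $\mathrm{rank}(\textbf{W}_j)$ equals the number of nonzero diagonal entries of $\textbf{S}_j$. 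Thus full row rank $n+m$ is equivalent to all $n+m$ of those entries being nonzero, and ranging over $j$ gives the claim. A brief remark should also note that $lh\ge n+m$ is implicitly necessary, as otherwise $\textbf{W}_j$ cannot have full row rank and some singular tuple entry is forced to be zero.
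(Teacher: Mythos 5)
Your proposal is correct and follows essentially the same route as the paper: unfold the data equation via $\psi(\cdot)$, invoke the linear-systems uniqueness criterion (full row rank of the stacked data matrix), and translate that rank condition into nonvanishing singular-tuple entries through the Fourier-domain block-diagonalization underlying the T-SVD. The only difference is organizational --- you carry out the per-block decomposition $\textbf{Z}_j=[\textbf{A}_j\ \textbf{B}_j][\textbf{Y}_j^\top\ \textbf{V}_j^\top]^\top$ explicitly inside the proof, whereas the paper states the rank condition on the full block circulant matrix and defers the blockwise version to Corollaries \ref{coro:sysid} and \ref{coro:sysid2}; your closing remark that $lh\ge n+m$ is implicitly required is a worthwhile observation the paper omits.
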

\begin{proof}
       Substituting the data $(\mathscr{V},\mathscr{Y},\mathscr{Z})$ into the unfolding representation \eqref{eq:unfold} results in
    \begin{equation*}
        \psi(\mathscr{Z}) = \begin{bmatrix}
            \psi(\mathscr{A}) & \psi(\mathscr{B})
        \end{bmatrix} \begin{bmatrix}
            \psi(\mathscr{Y}) \\ \psi(\mathscr{V})
        \end{bmatrix}.
        \end{equation*}
According to linear systems theory, the pair $(\psi(\mathscr{A}),\psi(\mathscr{B}))$ can be uniquely identified if and only if the rank of $[\psi(\mathscr{Y}) \text{ } \psi(\mathscr{V})]^\top$ is equal to $(n+m)r$. Additionally, based on the properties of the discrete Fourier transform, it can be shown that the entries of the singular tuples of a third-order tensor in the Fourier domain correspond to the singular values of its block circulant matrix.
Therefore, it follows that $[\psi(\mathscr{Y}) \text{ } \psi(\mathscr{V})]^\top$ has full rank if and only if all entries of the singular tuples of $[\mathscr{Y} \text{ } \mathscr{V}]^\top$ are nonzero. The result follows immediately. 
\end{proof}

Determining the singular tuples of a third-order tensor requires computing the SVDs of its diagonal block matrices, leading to the following corollaries.

\begin{corollary}\label{coro:sysid}
    The data $(\mathscr{V},\mathscr{Y},\mathscr{Z})$ are informative for system identification if and only if the rank of $[\textbf{Y}_j \text{ } \textbf{V}_j]^\top$ is equal to $n+m$ for $j=1,2,\dots,r$.
\end{corollary}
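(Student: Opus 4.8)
The plan is to reduce the corollary directly to Proposition~\ref{prop:sysid} by translating the statement ``all entries of the singular tuples of $[\mathscr{Y}\text{ }\mathscr{V}]^\top$ are nonzero'' into a statement about the ranks of the Fourier-domain blocks $[\textbf{Y}_j\text{ }\textbf{V}_j]^\top$. The essential observation, already invoked in the proof of Proposition~\ref{prop:sysid}, is that the nonzero singular values of the block circulant matrix $\psi\big([\mathscr{Y}\text{ }\mathscr{V}]^\top\big)$ are exactly the nonzero entries of the singular tuples of $[\mathscr{Y}\text{ }\mathscr{V}]^\top$, and that applying the discrete Fourier transform block-diagonalizes $\psi\big([\mathscr{Y}\text{ }\mathscr{V}]^\top\big)$ into $\texttt{blkdiag}\big([\textbf{Y}_1\text{ }\textbf{V}_1]^\top,\dots,[\textbf{Y}_r\text{ }\textbf{V}_r]^\top\big)$. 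Since $(\textbf{F}_{n+m}\otimes\textbf{I}_r)$ and its conjugate transpose are unitary, this block-diagonalization preserves singular values, so the multiset of singular values of $\psi\big([\mathscr{Y}\text{ }\mathscr{V}]^\top\big)$ is the union over $j$ of the singular values of $[\textbf{Y}_j\text{ }\textbf{V}_j]^\top$.

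Next I would carry out the rank bookkeeping. Each block $[\textbf{Y}_j\text{ }\textbf{V}_j]^\top$ has $n+m$ rows (and $lh$ columns), so it has full rank exactly when it has $n+m$ nonzero singular values. Concatenating over $j=1,\dots,r$, the full block-diagonal matrix has $(n+m)r$ nonzero singular values if and only if every block contributes its full complement of $n+m$; equivalently, $\psi\big([\mathscr{Y}\text{ }\mathscr{V}]^\top\big)$ has rank $(n+m)r$ if and only if $\operatorname{rank}\big([\textbf{Y}_j\text{ }\textbf{V}_j]^\top\big)=n+m$ for all $j$. Combining this with the singular-value correspondence of the previous step, the condition in Proposition~\ref{prop:sysid} --- all entries of the singular tuples of $[\mathscr{Y}\text{ }\mathscr{V}]^\top$ nonzero --- holds precisely when each $[\textbf{Y}_j\text{ }\textbf{V}_j]^\top$ has rank $n+m$.

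Finally, invoking Proposition~\ref{prop:sysid}, the data $(\mathscr{V},\mathscr{Y},\mathscr{Z})$ are informative for system identification if and only if all entries of the singular tuples of $[\mathscr{Y}\text{ }\mathscr{V}]^\top$ are nonzero, which by the equivalence just established is if and only if $\operatorname{rank}\big([\textbf{Y}_j\text{ }\textbf{V}_j]^\top\big)=n+m$ for $j=1,2,\dots,r$. I expect the only genuine subtlety --- the ``hard part,'' though it is mild --- to be making the singular-tuple/singular-value dictionary fully precise: one must be careful that a singular \emph{tuple} of the tensor corresponds, under the Fourier transform, to a collection of scalar singular values distributed across the $r$ blocks (one entry per block in the Fourier domain), and that ``all entries nonzero'' is therefore the same as ``each Fourier block is full rank,'' rather than some weaker slice-wise condition. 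Since the block circulant structure and the Fourier diagonalization were already set up in the preliminaries and used in Proposition~\ref{prop:sysid}, this amounts to a short, self-contained argument.
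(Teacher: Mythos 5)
Your argument is correct and follows essentially the same route as the paper's proof: both reduce the corollary to Proposition~\ref{prop:sysid} by identifying the entries of the singular tuples of $[\mathscr{Y}\text{ }\mathscr{V}]^\top$ with the singular values of the Fourier-domain blocks $[\textbf{Y}_j\text{ }\textbf{V}_j]^\top$, so that ``all entries nonzero'' is equivalent to each block having rank $n+m$. Your version simply makes explicit the unitary block-diagonalization and the rank bookkeeping that the paper's one-line proof leaves implicit.
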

\begin{proof}
The fact that the entries of $j$th singular tuple of $[\mathscr{Y} \text{ } \mathscr{V}]^\top$ in the Fourier domain are nonzero implies that the corresponding diagonal block matrix $[\textbf{Y}_j \text{ } \textbf{V}_j]^\top$ has full rank. Thus, the result follows immediately from Proposition \ref{prop:sysid}. 
\end{proof}

\begin{corollary}\label{coro:sysid2}
    The data $(\mathscr{V},\mathscr{Y},\mathscr{Z})$ are informative for system identification if and only if the data $(\textbf{V}_j,\textbf{Y}_j,\textbf{Z}_j)$ are informative for system identification for $j=1,2,\dots,r$.
\end{corollary}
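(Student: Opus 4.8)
The plan is to chain together the two preceding results. Corollary \ref{coro:sysid} already characterizes informativity for system identification in terms of the diagonal block matrices: the data $(\mathscr{V},\mathscr{Y},\mathscr{Z})$ are informative if and only if $[\textbf{Y}_j \text{ } \textbf{V}_j]^\top$ has rank $n+m$ for every $j$. So the entire task reduces to showing that, for each fixed index $j$, the single statement ``$[\textbf{Y}_j \text{ } \textbf{V}_j]^\top$ has rank $n+m$'' is equivalent to ``the data triple $(\textbf{V}_j,\textbf{Y}_j,\textbf{Z}_j)$ is informative for system identification'' in the ordinary linear sense.

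First I would recall (or re-derive in one line) the linear informativity criterion already used inside the proof of Proposition \ref{prop:sysid}: given the linear relation $\textbf{Z}_j = [\textbf{A}_j \text{ } \textbf{B}_j]\,[\textbf{Y}_j \text{ } \textbf{V}_j]^\top$ obtained by reading off the $j$th Fourier block of \eqref{eq:unfold}, the pair $(\textbf{A}_j,\textbf{B}_j)$ is uniquely determined by the data precisely when $[\textbf{Y}_j \text{ } \textbf{V}_j]^\top$ has full row rank $n+m$ (equivalently, its columns span $\mathbb{R}^{n+m}$, so the only solution of $[\Delta_A \text{ } \Delta_B]\,[\textbf{Y}_j \text{ } \textbf{V}_j]^\top = 0$ is $\Delta_A = 0$, $\Delta_B = 0$). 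This is exactly the definition of $(\textbf{V}_j,\textbf{Y}_j,\textbf{Z}_j)$ being informative for system identification applied to the $j$th Fourier-domain subsystem.

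Then I would simply invoke Corollary \ref{coro:sysid}: informativity of $(\mathscr{V},\mathscr{Y},\mathscr{Z})$ holds iff $\operatorname{rank}[\textbf{Y}_j \text{ } \textbf{V}_j]^\top = n+m$ for all $j$, which by the previous paragraph holds iff $(\textbf{V}_j,\textbf{Y}_j,\textbf{Z}_j)$ is informative for system identification for all $j$. That closes the equivalence.

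There is no real obstacle here — the statement is essentially a restatement of Corollary \ref{coro:sysid} phrased in the language of per-block informativity rather than per-block rank. The only point requiring a sentence of care is making explicit that the Fourier transform decouples the unfolded linear system \eqref{eq:unfold} into $r$ independent linear subsystems $\textbf{Z}_j = [\textbf{A}_j \text{ } \textbf{B}_j][\textbf{Y}_j \text{ } \textbf{V}_j]^\top$, so that ``informative for system identification'' can be meaningfully applied to each block separately and coincides with the classical rank condition; once that is said, the proof is a two-line citation of Corollary \ref{coro:sysid} together with the standard linear identifiability criterion.
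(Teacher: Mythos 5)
Your proposal is correct and follows essentially the same route as the paper: both invoke Corollary \ref{coro:sysid} to reduce the claim to the per-block rank condition $\operatorname{rank}[\textbf{Y}_j \text{ } \textbf{V}_j]^\top = n+m$, and then identify that condition with classical linear informativity of $(\textbf{V}_j,\textbf{Y}_j,\textbf{Z}_j)$ for each $j$. Your added remark that the Fourier transform decouples \eqref{eq:unfold} into $r$ independent subsystems makes explicit what the paper leaves to ``linear systems theory,'' but it is the same argument.
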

\begin{proof}
    If the data $(\mathscr{V},\mathscr{Y},\mathscr{Z})$ are informative for system identification, the rank of $[\textbf{Y}_j \text{ } \textbf{V}_j]^\top$ is equal to $n+m$ for $j=1,2,\dots,r$ by Corollary \ref{coro:sysid}. Based on linear systems theory, this implies that  the data $(\textbf{V}_j,\textbf{Y}_j,\textbf{Z}_j)$ are informative for system identification. The converse can be proven similarly.
\end{proof}

\begin{remark}
    The computational complexity of directly computing the rank of $[\psi(\mathscr{Y}) \text{ }\psi(\mathscr{V})]^\top$ is about $\mathcal{O}((n+m)^2r^3lh)$ assuming $n<lh$. In contrast, leveraging Corollary \ref{coro:sysid2} reduces the complexity to $\mathcal{O}((n+m)^2rlh)$  for determining the
data informativity for system identification of the controlled TPDS \eqref{eq:tpdsc}. Our approach provides significant computational advantages over the unfolding-based method.
\end{remark}

\subsection{Stabilization by State Feedback}

The objective of stabilization by state feedback is to find a control input of the form $\mathscr{U}(t) = -\mathscr{K}\star\mathscr{X}(t)$ such that the closed-loop system $$\mathscr{X}(t+1)=(\mathscr{A}-\mathscr{B}\star\mathscr{K})\star \mathscr{X}(t)$$ is stable. Specifically, this requires that all entries of the eigentuples of $\mathscr{A}-\mathscr{B}\star\mathscr{K}$ in the Fourier domain lie within the unit circle \cite{mao2024data}. Designing state feedback for the controlled TPDS \eqref{eq:tpdsc} has significant implications for video dynamics, enabling pixel-level control \cite{yin2015control,sanchez2018real}. It ensures that a block (e.g., a region of interest in the video) aligns predictably with either a stationary target or a dynamically changing reference trajectory. It is therefore crucial to ensure that the available data provides enough information to design a feedback gain tensor capable of achieving the desired stability properties.

\begin{definition}[Stabilization]
We say the data $(\mathscr{V},\mathscr{Y},\mathscr{Z})$ are informative for stabilization by state feedback if there exists a feedback gain tensor $\mathscr{K}\in\mathbb{R}^{m\times n\times r}$ such that $\mathscr{A} - \mathscr{B} \star \mathscr{K}\in\mathbb{R}^{n\times n\times r}$ is stable for any pair $(\mathscr{A}, \mathscr{B})$ identified from the data.
\end{definition}

\begin{proposition}\label{prop:statefeedback}
    The data $(\mathscr{V},\mathscr{Y},\mathscr{Z})$ are informative for stabilization by state feedback if and only if there exists a tensor $\mathscr{S}\in\mathbb{R}^{lh\times n\times r}$ such that $\mathscr{Y}\star\mathscr{S}=\mathscr{S}^\top\star\mathscr{Y}^\top$ and all entries of the eigentuples of the  block tensor
    \begin{equation}\label{eq:blkmtx}
        \begin{bmatrix}
            \mathscr{Y}\star\mathscr{S} & \mathscr{Z}\star\mathscr{S}\\
            \mathscr{S}^\top\star\mathscr{Z}^\top & \mathscr{Y}\star\mathscr{S} 
        \end{bmatrix}\in\mathbb{R}^{2n\times 2n\times r}
    \end{equation}
    in the Fourier domain are positive. In other words, the block tensor \eqref{eq:blkmtx} is T-positive definite. Moreover, the state feedback gain can be computed as
    $
        \mathscr{K} = \mathscr{V}\star\mathscr{S}\star(\mathscr{Y}\star\mathscr{S})^{-1}.
    $
\end{proposition}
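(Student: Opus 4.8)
The plan is to reduce the claim to the linear data-informativity theorem for stabilization of Van Waarde et al.\ \cite{van2020data} by block-diagonalizing the T-product through the Fourier transform, and then to translate the resulting slicewise conditions back into T-product language. I will repeatedly use the correspondences recalled in Section~\ref{sec:prelim}: under $\mathcal{F}\{\psi(\cdot)\}$ the T-product, T-transpose, and T-inverse act blockwise as the ordinary matrix product, conjugate transpose, and inverse on the diagonal blocks; a tensor is T-positive definite exactly when each of its Fourier blocks is Hermitian positive definite, equivalently when every entry of its eigentuples in the Fourier domain is positive; and for a real tensor the blocks obey the conjugacy $\textbf{T}_{r+2-j}=\overline{\textbf{T}_j}$, so that a real tensor is pinned down by its blocks on (roughly) the first half of the index set, the self-conjugate blocks being real matrices.

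First I would substitute the data into \eqref{eq:unfold} and pass to the Fourier domain: a real pair $(\mathscr{A},\mathscr{B})$ is consistent with $(\mathscr{V},\mathscr{Y},\mathscr{Z})$ if and only if $\textbf{Z}_j=\textbf{A}_j\textbf{Y}_j+\textbf{B}_j\textbf{V}_j$ for $j=1,\dots,r$, the conjugacy constraint on $(\textbf{A}_j,\textbf{B}_j)$ being automatically compatible with these equations since the data tensors are real. Because $\mathscr{A}-\mathscr{B}\star\mathscr{K}$ is stable precisely when every $\textbf{A}_j-\textbf{B}_j\textbf{K}_j$ is Schur, and a real $\mathscr{K}$ can be assembled from any admissible choice of its Fourier blocks, I would establish that $(\mathscr{V},\mathscr{Y},\mathscr{Z})$ are informative for stabilization of \eqref{eq:tpdsc} if and only if, for every $j$, the linear data $(\textbf{V}_j,\textbf{Y}_j,\textbf{Z}_j)$ are informative for stabilization by state feedback, i.e.\ some $\textbf{K}_j$ makes $\textbf{A}_j-\textbf{B}_j\textbf{K}_j$ Schur for every consistent $(\textbf{A}_j,\textbf{B}_j)$.

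Next I would apply the linear result of \cite{van2020data} slicewise (its proof carries over to complex Hermitian data) to obtain, for each $j$, a matrix $\textbf{S}_j$ with $\textbf{Y}_j\textbf{S}_j=(\textbf{Y}_j\textbf{S}_j)^{*}$ and
\[
\begin{bmatrix}\textbf{Y}_j\textbf{S}_j & \textbf{Z}_j\textbf{S}_j\\ (\textbf{Z}_j\textbf{S}_j)^{*} & \textbf{Y}_j\textbf{S}_j\end{bmatrix}\succ 0 ,
\]
with associated gain $\textbf{V}_j\textbf{S}_j(\textbf{Y}_j\textbf{S}_j)^{-1}$. Choosing $\textbf{S}_{r+2-j}=\overline{\textbf{S}_j}$ on the conjugate indices makes $\mathscr{S}:=(\mathcal{F}\circ\psi)^{-1}\big(\texttt{blkdiag}(\textbf{S}_1,\dots,\textbf{S}_r)\big)\in\mathbb{R}^{lh\times n\times r}$ a genuine real tensor, and the blockwise dictionary converts ``$\textbf{Y}_j\textbf{S}_j$ Hermitian for all $j$'' into $\mathscr{Y}\star\mathscr{S}=\mathscr{S}^{\top}\star\mathscr{Y}^{\top}$ and the positive definiteness of all $2n\times 2n$ blocks into the T-positive definiteness of \eqref{eq:blkmtx} (equivalently, positivity of all its Fourier-domain eigentuple entries); reassembling the slicewise gains yields $\mathscr{K}=\mathscr{V}\star\mathscr{S}\star(\mathscr{Y}\star\mathscr{S})^{-1}$. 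The converse (``if'') direction falls out of the same chain of equivalences, but it can also be checked directly in the T-algebra: $\mathscr{Y}\star\mathscr{S}$ is the leading block of the T-positive definite tensor \eqref{eq:blkmtx}, hence itself T-positive definite and invertible; writing $\mathscr{P}:=\mathscr{Y}\star\mathscr{S}$ and using $\mathscr{Z}=\mathscr{A}\star\mathscr{Y}+\mathscr{B}\star\mathscr{V}$, the data-determined feedback makes the closed loop act on $\mathscr{P}$ as $(\mathscr{A}-\mathscr{B}\star\mathscr{K})\star\mathscr{P}=\mathscr{Z}\star\mathscr{S}$, so that \eqref{eq:blkmtx} is exactly the Schur-complement form of the T-Lyapunov inequality $\mathscr{P}-(\mathscr{A}-\mathscr{B}\star\mathscr{K})\star\mathscr{P}\star(\mathscr{A}-\mathscr{B}\star\mathscr{K})^{\top}\succ 0$; this certifies stability of the closed loop for every $(\mathscr{A},\mathscr{B})$ consistent with the data, and with the \emph{same} $\mathscr{K}$.

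I expect the main obstacle to be the reduction in the second paragraph: showing that informativity for the TPDS is equivalent to slicewise informativity. One must argue carefully that restricting the consistent pairs to block-circulant tensors --- equivalently, imposing the conjugacy $\textbf{T}_{r+2-j}=\overline{\textbf{T}_j}$ --- neither enlarges nor shrinks the family over which the adversary ranges, so that the robust stabilization problem genuinely decouples across Fourier slices and the per-slice feedbacks glue into a single real $\mathscr{K}$; the self-conjugate slices, where the blocks must be real, need a separate but easier treatment. A secondary technical point is the faithful passage between T-positive definiteness of the $2n\times 2n\times r$ block tensor \eqref{eq:blkmtx} and Hermitian positive definiteness of its Fourier blocks, which rests on $\psi$ of a block tensor being a block-circulant rearrangement of the corresponding block matrix and on T-EVD diagonalizing slicewise. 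Everything else --- invertibility of $\mathscr{Y}\star\mathscr{S}$ as a leading block, the Schur-complement manipulation, and the bookkeeping for the gain formula --- is routine once these structural facts are in place.
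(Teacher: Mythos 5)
Your argument is correct, but it takes a different route from the paper's proof of this proposition. The paper stays at the unfolded level: it invokes the real linear data-informativity theorem of \cite{van2020data} for the $nr$-dimensional system $\psi(\mathscr{X}(t+1))=\psi(\mathscr{A})\psi(\mathscr{X}(t))+\psi(\mathscr{B})\psi(\mathscr{U}(t))$ to obtain a matrix $\textbf{S}\in\mathbb{R}^{lhr\times nr}$ satisfying the Lyapunov-type LMI, then sets $\psi(\mathscr{S})=\textbf{S}$ and translates the two conditions and the gain formula back into T-product notation; the Fourier decoupling is deferred to the subsequent corollaries. You instead diagonalize first, apply a (complex Hermitian) version of the linear result slice by slice, and glue the $\textbf{S}_j$ back into a real tensor via the conjugate symmetry $\textbf{S}_{r+2-j}=\overline{\textbf{S}_j}$. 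Your route costs you an appeal to the complex extension of the linear theorem, which you assert rather than prove, but it buys something real: it confronts the two points the paper's proof glosses over, namely (a) that the adversary set for the TPDS consists only of block-circulant consistent pairs, so equivalence with unfolded-linear informativity (or with slicewise informativity) requires the projection argument you sketch, and (b) that the certificate $\textbf{S}$ must itself be block circulant in the ``only if'' direction, which does not follow from applying the linear theorem to the unfolded system but does follow from your slicewise construction. Your direct T-algebraic verification of the ``if'' direction via the Schur complement of \eqref{eq:blkmtx} is also a sound, self-contained alternative to citing the linear result twice. In short, your proof is closer in spirit to the paper's Corollaries \ref{coro:statefeedback} and \ref{coro:statefeedback2}, derived independently and with the structural subtleties made explicit.
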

\begin{proof}
    Based on the data informaitivity for stabilization for linear systems and the unfolding representation \eqref{eq:unfold},  the  data  are informative for stabilization by state feedback if and only if there exists a  matrix $\textbf{S}\in\mathbb{R}^{lhr\times nr}$ such that 
    \begin{equation}\label{eq:statecond}
    \psi(\mathscr{Y})\textbf{S} = \textbf{S}^\top \psi(\mathscr{Y})^\top \text{ and }
        \begin{bmatrix}
            \psi(\mathscr{Y}) \textbf{S} & \psi(\mathscr{Z}) \textbf{S}\\
            \textbf{S}^\top\psi(\mathscr{Z})^\top  & \psi(\mathscr{Y}) \textbf{S}
        \end{bmatrix}\succ 0.
    \end{equation}
Moreover, the state feedback gain can be computed as $\textbf{K}=\psi(\mathscr{V})\textbf{S}(\psi(\mathscr{Y})\textbf{S})^{-1}$. Let $\mathscr{S}\in\mathbb{R}^{lh\times n\times r}$ such that $\psi(\mathscr{S}) = \textbf{S}$. For the first condition, by the definition of the T-transpose, it follows that
\begin{equation*}
    \mathscr{Y}\star\mathscr{S}=\psi(\mathscr{Y})\psi(\mathscr{S})= \psi(\mathscr{S})^\top \psi(\mathscr{Y})^\top = \mathscr{S}^\top\star\mathscr{Y}^\top
\end{equation*}
Similarly, for the second condition, the block matrix is positive definite if and only if \eqref{eq:blkmtx} is T-positive definite, which implies that all entries of its eigentuples in the Fourier domain are positive. Finally, the state feedback gain can be constructed by $\psi^{-1}(\textbf{K})=\mathscr{V}\star\mathscr{S}\star(\mathscr{Y}\star\mathscr{S})^{-1}$.
\end{proof}

Similar to T-SVD, solving the T-EVD of a third-order tensor requires computing the EVDs of its diagonal block matrices, which results in the following corollaries. 

\begin{corollary}\label{coro:statefeedback}
The data $(\mathscr{V},\mathscr{Y},\mathscr{Z})$ are informative for stabilization by state feedback if and only if there exist  matrices $\textbf{S}_1, \textbf{S}_2,\dots,\textbf{S}_r\in\mathbb{R}^{lh\times n}$ such that  for $j=1,2,\dots,r$, 
\begin{equation}\label{eq:statefeed}
\textbf{Y}_j\textbf{S}_j=\textbf{S}_j^\top\textbf{Y}_j^\top \text{ and }
    \begin{bmatrix}
        \textbf{Y}_j\textbf{S}_j & \textbf{Z}_j\textbf{S}_j\\
        \textbf{S}_j^\top \textbf{Z}_j^\top & \textbf{Y}_j\textbf{S}_j
    \end{bmatrix}\succ 0.
\end{equation}
Moreover, the feedback gain $\mathscr{K}$ can be computed from
\begin{equation}\label{eq:gain}
    \mathscr{K} =\psi^{-1}\big( \mathcal{F}^{-1}\{\texttt{blkdiag}(\textbf{K}_1,\textbf{K}_2,\dots,\textbf{K}_r)\}\big),
\end{equation}
where $\textbf{K}_j=\textbf{V}_j\textbf{S}_j(\textbf{Y}_j\textbf{S}_j)^{-1}$.
\end{corollary}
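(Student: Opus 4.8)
The plan is to derive Corollary \ref{coro:statefeedback} from Proposition \ref{prop:statefeedback} by exploiting the block-diagonal structure that the discrete Fourier transform induces on block circulant matrices. The starting observation is that the conjugation $\mathcal{F}\{\cdot\} = (\mathbf{F}_n \otimes \mathbf{I}_r)\,\psi(\cdot)\,(\mathbf{F}_n^* \otimes \mathbf{I}_r)$ is a unitary similarity transform, so it preserves both the Hermitian-symmetry constraint and positive definiteness. Applying $\mathcal{F}$ to the two conditions in \eqref{eq:statecond} of Proposition \ref{prop:statefeedback}, and writing $\mathbf{S}$ itself in block-diagonal Fourier coordinates as $\mathcal{F}\{\psi(\mathscr{S})\} = \texttt{blkdiag}(\mathbf{S}_1,\dots,\mathbf{S}_r)$ with each $\mathbf{S}_j \in \mathbb{C}^{lh \times n}$, the products $\psi(\mathscr{Y})\mathbf{S}$, $\psi(\mathscr{Z})\mathbf{S}$, etc., all become block diagonal with blocks $\mathbf{Y}_j \mathbf{S}_j$, $\mathbf{Z}_j \mathbf{S}_j$, and so on. A block-diagonal Hermitian matrix is positive definite if and only if each diagonal block is positive definite, and a block-diagonal matrix equals its conjugate transpose if and only if each block does; hence \eqref{eq:statecond} decouples into the $r$ conditions in \eqref{eq:statefeed}, one for each Fourier frequency. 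Conversely, any collection $\{\mathbf{S}_j\}$ satisfying \eqref{eq:statefeed} can be reassembled via $\psi(\mathscr{S}) = (\mathbf{F}_n^* \otimes \mathbf{I}_r)\,\texttt{blkdiag}(\mathbf{S}_1,\dots,\mathbf{S}_r)\,(\mathbf{F}_n \otimes \mathbf{I}_r)$ into a matrix $\mathbf{S}$ meeting \eqref{eq:statecond}, which gives the "if" direction.

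The second half is the gain formula. From Proposition \ref{prop:statefeedback} the unfolded gain is $\mathbf{K} = \psi(\mathscr{V})\mathbf{S}(\psi(\mathscr{Y})\mathbf{S})^{-1}$; conjugating by the Fourier transform and using that inversion commutes with the block-diagonal decomposition, $\mathcal{F}\{\mathbf{K}\} = \texttt{blkdiag}(\mathbf{V}_1 \mathbf{S}_1 (\mathbf{Y}_1 \mathbf{S}_1)^{-1}, \dots, \mathbf{V}_r \mathbf{S}_r (\mathbf{Y}_r \mathbf{S}_r)^{-1})$, which is exactly $\texttt{blkdiag}(\mathbf{K}_1,\dots,\mathbf{K}_r)$ with $\mathbf{K}_j = \mathbf{V}_j \mathbf{S}_j (\mathbf{Y}_j \mathbf{S}_j)^{-1}$. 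Undoing the Fourier transform and the unfolding then yields $\mathscr{K} = \psi^{-1}\big(\mathcal{F}^{-1}\{\texttt{blkdiag}(\mathbf{K}_1,\dots,\mathbf{K}_r)\}\big)$, i.e., \eqref{eq:gain}. I would also remark that invertibility of $\mathbf{Y}_j \mathbf{S}_j$ is guaranteed: the positive definiteness of the $2n\times 2n$ block in \eqref{eq:statefeed} forces its diagonal block $\mathbf{Y}_j \mathbf{S}_j$ to be positive definite, hence nonsingular, so each $\mathbf{K}_j$ is well defined.

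The main obstacle — really the only subtle point — is the bookkeeping of where Hermitian symmetry and realness live. The matrices $\mathbf{Y}_j, \mathbf{Z}_j, \mathbf{V}_j$ and the free variables $\mathbf{S}_j$ are complex in general (they are Fourier-domain blocks), and they come in conjugate pairs ($\mathbf{Y}_{r+2-j} = \overline{\mathbf{Y}_j}$ and so forth) that ensure $\mathscr{K}$ comes out real. One should check that imposing \eqref{eq:statefeed} for all $j$ together with this conjugate-symmetry structure is consistent, and that $\texttt{blkdiag}(\mathbf{K}_1,\dots,\mathbf{K}_r)$ really does map back to a real tensor under $\psi^{-1} \circ \mathcal{F}^{-1}$ — but this is exactly the same mechanism by which T-EVD and T-positive definiteness were already characterized in Section \ref{sec:prelim}, so I would invoke that machinery rather than re-derive it. Everything else is the routine fact that block-diagonal conjugation turns a single matrix condition into $r$ independent blockwise conditions.
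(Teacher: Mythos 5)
Your proposal is correct and follows essentially the same route as the paper: both derive the corollary from Proposition~\ref{prop:statefeedback} by Fourier-diagonalizing the block circulant matrices so that the symmetry and positivity conditions decouple frequency by frequency, and both obtain the gain formula by the same conjugation. The only cosmetic difference is that the paper splits the $2nr\times 2nr$ LMI via a Schur complement before transforming, whereas you conjugate (and implicitly permute) the full block LMI directly; your explicit attention to the conjugate-pair structure that guarantees a real $\mathscr{K}$ addresses a point the paper leaves implicit.
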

\begin{proof}
Following Proposition \ref{prop:statefeedback}, for the first condition, the properties of the block circulant operation and the Fourier transform imply that
\begin{align*}
    &\mathcal{F}\{\psi(\mathscr{Y})\} \mathcal{F}\{\psi(\mathscr{S})\} =\mathcal{F}\{\psi(\mathscr{Y})\psi(\mathscr{S})\} \\
    &=\mathcal{F}\{\psi(\mathscr{S})^\top\psi(\mathscr{Y})^\top\} = \mathcal{F}\{\psi(\mathscr{S})\}^\top \mathcal{F}\{\psi(\mathscr{Y})\}^\top.
\end{align*}
Therefore, it follows that $\textbf{Y}_j\textbf{S}_j=\textbf{S}_j^\top\textbf{Y}_j$ for all $j=1,2,\dots, r$. For the second condition, the block matrix is positive definite if and only if $\psi(\mathscr{Y})\psi(\mathscr{S})\succ 0$ and the Schur complement 
\begin{align*}
    \psi(\mathscr{Y})\psi(\mathscr{S})-\big(\psi(\mathscr{Z})\psi(\mathscr{S})\big)\big(\psi(\mathscr{Y})\psi(\mathscr{S})\big)^{-1}\big(\psi(\mathscr{S})^\top\psi(\mathscr{Z})^\top\big)\succ 0.
\end{align*}
Using analogous algebraic manipulations and the fact that block circulant matrices preserve positive definiteness under the Fourier transform, the two conditions are equivalent to 
\begin{equation*}
    \textbf{Y}_j\textbf{S}_j\succ 0 \text{ and } \textbf{Y}_j\textbf{S}_j-\textbf{Z}_j\textbf{S}_j(\textbf{Y}_j\textbf{S}_j)^{-1}\textbf{S}_j^\top\textbf{Z}_j\succ 0
\end{equation*}
for all $j=1,2,\dots,r$. Finally, the feedback gain matrices can be derived similarly, expressed as $\textbf{K}_j=\textbf{V}_j\textbf{S}_j(\textbf{Y}_j\textbf{S}_j)^{-1}$, which can be used to construct the gain tensor $\mathscr{K}$.
\end{proof}

\begin{corollary}\label{coro:statefeedback2}
    The data $(\mathscr{V},\mathscr{Y},\mathscr{Z})$ are informative for stabilization by state feedback if and only if the data $(\textbf{V}_j,\textbf{Y}_j,\textbf{Z}_j)$ are informative for stabilization by state feedback for $j=1,2,\dots,r$.
\end{corollary}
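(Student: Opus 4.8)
The plan is to reduce both directions of the equivalence to feasibility of the \emph{same} family of per-block linear matrix inequalities, namely \eqref{eq:statefeed}, so that the corollary becomes a transitivity statement chained through Corollary~\ref{coro:statefeedback}. First I would record the tensor-side characterization already established: by Corollary~\ref{coro:statefeedback}, the data $(\mathscr{V},\mathscr{Y},\mathscr{Z})$ are informative for stabilization by state feedback if and only if, for each $j=1,2,\dots,r$, there exists $\textbf{S}_j$ with $\textbf{Y}_j\textbf{S}_j=\textbf{S}_j^\top\textbf{Y}_j^\top$ and the $2n\times 2n$ block matrix in \eqref{eq:statefeed} positive definite.

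Next I would invoke the linear data informativity theory of Van Waarde et al., applied to the $j$th Fourier-domain triple: interpreting $(\textbf{V}_j,\textbf{Y}_j,\textbf{Z}_j)$ as state/input data generated by the linear system with matrices $(\textbf{A}_j,\textbf{B}_j)$, that data is informative for stabilization by state feedback precisely when there is a matrix $\textbf{S}_j$ solving exactly \eqref{eq:statefeed} for that index $j$ — this is the single-block instance of the matrix condition \eqref{eq:statecond} underlying Proposition~\ref{prop:statefeedback}. Since the feasibility set for index $j$ appearing in Corollary~\ref{coro:statefeedback} is identical to the one characterizing informativity of $(\textbf{V}_j,\textbf{Y}_j,\textbf{Z}_j)$, the two equivalences compose: $(\mathscr{V},\mathscr{Y},\mathscr{Z})$ informative $\iff$ \eqref{eq:statefeed} feasible for every $j$ $\iff$ $(\textbf{V}_j,\textbf{Y}_j,\textbf{Z}_j)$ informative for every $j$. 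For the reverse implication one also reassembles the per-block solutions into $\mathscr{S}=\psi^{-1}\big(\mathcal{F}^{-1}\{\texttt{blkdiag}(\textbf{S}_1,\dots,\textbf{S}_r)\}\big)$ and reads off the feedback gain via \eqref{eq:gain}, exactly as in the proof of Corollary~\ref{coro:statefeedback}; the forward implication is then immediate.

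The delicate point — and the one I would spell out carefully — is the passage between the complex Fourier-domain blocks and the real tensor data: the blocks $\textbf{Y}_j,\textbf{Z}_j,\textbf{V}_j$ occur in conjugate-symmetric pairs because $\psi(\mathscr{Y})$, $\psi(\mathscr{Z})$, $\psi(\mathscr{V})$ are real, so to guarantee that the reassembled $\mathscr{S}$ is real one must choose the solutions $\textbf{S}_j$ compatibly across conjugate indices (and, for a genuinely complex block, read the symmetry constraint as the Hermitian condition). This is essentially bookkeeping already implicit in Corollary~\ref{coro:statefeedback}; once it is in place, no new argument is required, and the statement follows as a clean corollary of the linear theory together with the block-diagonalization furnished by the Fourier transform.
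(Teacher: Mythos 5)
Your proof is correct and follows essentially the same route as the paper's: both directions are obtained by chaining Corollary~\ref{coro:statefeedback} (tensor informativity $\iff$ per-block feasibility of \eqref{eq:statefeed}) with the linear informativity characterization of each Fourier-domain triple $(\textbf{V}_j,\textbf{Y}_j,\textbf{Z}_j)$. Your explicit handling of the conjugate-symmetric pairing required to reassemble a real $\mathscr{S}$ is a detail the paper leaves implicit, but it does not alter the argument.
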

\begin{proof}
    If the data $(\mathscr{V},\mathscr{Y},\mathscr{Z})$ are informative for stabilization by state feedback, it follows that there exist  matrices $\textbf{S}_1, \textbf{S}_2,\dots,\textbf{S}_r\in\mathbb{R}^{lh\times n}$ such that \eqref{eq:statefeed} is satisfied for $j=1,2,\dots,r$ by Corollary \ref{coro:statefeedback}. Thus, the data $(\textbf{V}_j,\textbf{Y}_j,\textbf{Z}_j)$ are informative for stabilization by state feedback. The converse can be shown in a similar manner. 
\end{proof}

\begin{remark}\label{re:2}
Leveraging the properties of the T-product, the problem is decomposed into $r$ sub-problems, each of which can be solved independently. The time complexity for each sub-problem is comprised of three main components: (i) Constructing the block diagonal matrices of $\mathscr{Y}$, $\mathscr{Z}$, and $\mathscr{V}$ involves $\mathcal{O}(n^2lh)$ operations; (ii) Solving the LMI positive definite problem has a computational complexity of $\mathcal{O}(n^6\log{\frac{1}{\epsilon}})$ where $\epsilon$ is the desired accuracy; (iii) reconstructing the tensor $\mathscr{S}$ by applying the inverse Fourier transform requires $\mathcal{O}(hlnr\log{\frac{1}{\epsilon}})$ operations. Therefore, the total time complexity of Corollary \ref{coro:statefeedback2} can be estimated as
    \begin{equation}
        \mathcal{O}(n^2hl+hlnr\log{r}+n^6r\log{\frac{1}{\epsilon}}).
    \end{equation}
In contrast, the time complexity of direct computation through unfolding is about 
\begin{equation*}
    \mathcal{O}(n^2r^3hl+n^6r^6\log{\frac{1}{\epsilon}}).
\end{equation*}
This significant reduction  demonstrates the efficiency of our approach for determining  data informativity for stabilization by state feedback, saving  $\mathcal{O}(r^5)$ in computational time. 
\end{remark}

\subsection{T-Product Quadratic Regulation}
Linear quadratic regulation (LQR) is another widely used optimal state feedback control strategy \cite{pang2020data,scokaert1998constrained,wu2018optimal}. We here extend the LQR framework to controlled TPDSs, referring this approach to as T-product quadratic regulation (TQR). Similar to LQR, the quadratic cost function of TQR is defined as
\begin{equation}\label{eq:cost}
    J = \sum_{t=0}^{\infty} \mathscr{X}(t)^\top \star\mathscr{Q} \star\mathscr{X}(t) + \mathscr{U}(t)^\top \star\mathscr{R}\star \mathscr{U}(t), 
\end{equation}
where $\mathscr{Q}=\mathscr{Q}^\top\in\mathbb{R}^{n\times n\times r}$ is T-positive semi-definite and $\mathscr{R}=\mathscr{R}^\top\in\mathbb{R}^{m\times m\times r}$ is T-positive definite. For simplicity, we define $\textbf{Q}_j$ and $\textbf{R}_j$ as the diagonal block matrices of $mathscr{Q}$ and $\mathscr{R}$, respectively, for $j=1,2,\dots, r$. The objective of TQR is to find an optimal state feedback $\mathscr{U}(t) = -\mathscr{K}\star\mathscr{X}(t)$ such that the cost function \eqref{eq:cost} is minimized and the closed-loop system is stable.

We first introduce the notions of stabilizability and detectability for TPDSs. A pair $(\mathscr{A},\mathscr{B})$ is said to be stabilizable if there exists a feedback gain $\mathscr{K}$ such that the closed-loop system is stable. Additionally, detectability is the dual concept of stabilizability such that a pair $(\mathscr{Q},\mathscr{A})$ is detectable if and only if the pair $(\mathscr{A}^\top, \mathscr{Q}^\top)$ is stabilizable. Both concepts can be directly related to their counterparts in linear systems.

\begin{proposition}\label{prop:tqr}
    Let $\mathscr{Q}=\mathscr{Q}^\top\in\mathbb{R}^{n\times n\times r}$ is T-positive semi-definite and $\mathscr{R}=\mathscr{R}^\top\in\mathbb{R}^{m\times m\times r}$ is T-positive definite. If the pair $(\mathscr{A},\mathscr{B})$ is stabilizable and the pair $(\mathscr{Q},\mathscr{A})$ is detectable, there exists an optimal state feedback from TQR such that the feedback gain tensor   can be  computed as 
    \begin{equation}
        \mathscr{K}=-(\mathscr{R}+\mathscr{B}^\top \star\mathscr{P}\star\mathscr{B})^{-1}\star\mathscr{B}^\top\star\mathscr{P}\star\mathscr{A}
    \end{equation}
where $\mathscr{P}\in\mathbb{R}^{n\times n\times r}$ is the unique solution to the T-algebraic Riccati equation  defined as
\begin{equation}
    \begin{split}\mathscr{P}&=\mathscr{A}^\top\star\mathscr{P}\star\mathscr{A}-\mathscr{A}^\top\star\mathscr{P}\star\mathscr{B}\star(\mathscr{R}+\mathscr{B}^\top\star\mathscr{P}\star\mathscr{B})^{-1}\\&\star\mathscr{B}^\top\star\mathscr{P}\star\mathscr{A}+\mathscr{Q}.
    \end{split}
\end{equation}
\end{proposition}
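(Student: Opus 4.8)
The plan is to reduce the tensor statement to the classical discrete-time LQR result by passing to the Fourier domain, exactly in the spirit of Propositions~\ref{prop:sysid} and~\ref{prop:statefeedback} and their corollaries. First I would apply the block circulant operator $\psi(\cdot)$ to the controlled TPDS \eqref{eq:tpdsc} and to the cost \eqref{eq:cost}, using the fact that $\psi$ and $\phi$ turn the T-product into ordinary matrix multiplication and the T-transpose into ordinary transpose; this rewrites $J$ as $\sum_{t=0}^\infty \phi(\mathscr{X}(t))^\top \psi(\mathscr{Q})\phi(\mathscr{X}(t)) + \phi(\mathscr{U}(t))^\top \psi(\mathscr{R})\phi(\mathscr{U}(t))$ (up to the harmless scaling coming from $\phi$ versus $\psi$), an honest LQR cost for the lifted linear system $(\psi(\mathscr{A}),\psi(\mathscr{B}))$. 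Since T-positive (semi-)definiteness of $\mathscr{Q},\mathscr{R}$ is equivalent to positive (semi-)definiteness of $\psi(\mathscr{Q}),\psi(\mathscr{R})$, and T-stabilizability/T-detectability are, by the remarks just preceding the proposition, equivalent to stabilizability of $(\psi(\mathscr{A}),\psi(\mathscr{B}))$ and detectability of $(\psi(\mathscr{Q}),\psi(\mathscr{A}))$, the standard LQR theorem applies verbatim: there is a unique stabilizing positive semi-definite solution $\mathbf{P}$ to the discrete-time algebraic Riccati equation in the lifted variables, and the optimal gain is $\mathbf{K} = -(\psi(\mathscr{R})+\psi(\mathscr{B})^\top \mathbf{P}\,\psi(\mathscr{B}))^{-1}\psi(\mathscr{B})^\top \mathbf{P}\,\psi(\mathscr{A})$.

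Next I would show that the lifted Riccati solution $\mathbf{P}$ is itself block circulant, so that it equals $\psi(\mathscr{P})$ for a unique $\mathscr{P}\in\mathbb{R}^{n\times n\times r}$, and that $\mathscr{P}=\mathscr{P}^\top$. The cleanest way is to diagonalize by the Fourier transform: conjugating by $\mathbf{F}_n\otimes\mathbf{I}_r$ block-diagonalizes every circulant matrix, so the lifted ARE decouples into $r$ independent $n\times n$ discrete-time AREs in the frontal-slice data $(\mathbf{A}_j,\mathbf{B}_j,\mathbf{Q}_j,\mathbf{R}_j)$, each with a unique stabilizing solution $\mathbf{P}_j$. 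Reassembling $\texttt{blkdiag}(\mathbf{P}_1,\dots,\mathbf{P}_r)$ and undoing the Fourier conjugation produces a block circulant matrix, hence $\psi(\mathscr{P})$; uniqueness of each $\mathbf{P}_j$ gives uniqueness of $\mathscr{P}$, and symmetry of each $\mathbf{P}_j$ together with the structure of the Fourier conjugation gives $\psi(\mathscr{P})=\psi(\mathscr{P})^\top$, i.e. $\mathscr{P}=\mathscr{P}^\top$. Applying $\psi^{-1}$ to the lifted ARE and to the gain formula, and using that $\psi$ is an algebra homomorphism intertwining matrix inverse with T-inverse, then yields precisely the stated T-algebraic Riccati equation and the stated expression for $\mathscr{K}$. (One should note the sign convention: the displayed $\mathscr{K}$ with the leading minus sign corresponds to the control law $\mathscr{U}(t)=-\mathscr{K}\star\mathscr{X}(t)=(\mathscr{R}+\mathscr{B}^\top\star\mathscr{P}\star\mathscr{B})^{-1}\star\mathscr{B}^\top\star\mathscr{P}\star\mathscr{A}\star\mathscr{X}(t)$; I would keep the bookkeeping consistent with whatever convention the paper has fixed.)

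I expect the only real subtlety — the main obstacle — to be the equivalence between T-stabilizability/T-detectability and the corresponding linear-systems notions for the lifted pair, and in particular ensuring this equivalence descends to each Fourier block so that each of the $r$ scalar AREs genuinely has a stabilizing solution. T-stability is defined via eigentuples in the Fourier domain lying in the unit disc, and one must check that T-stabilizability of $(\mathscr{A},\mathscr{B})$ is equivalent to stabilizability of every $(\mathbf{A}_j,\mathbf{B}_j)$ (analogously for detectability of $(\mathbf{Q}_j,\mathbf{A}_j)$), which is the tensor analogue of the block-decoupling argument used in Corollaries~\ref{coro:sysid2} and~\ref{coro:statefeedback2}; I would either invoke the corresponding result from \cite{mao2024data,chen2024tensor} or prove it inline by the same Fourier-diagonalization device. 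A secondary technical point is the optimality/uniqueness claim: "there exists an optimal state feedback from TQR" should be justified by transporting the standard LQR optimality argument (e.g. completion of squares with the value function $\mathscr{X}^\top\star\mathscr{P}\star\mathscr{X}$, or the monotone-iteration/limit argument) through $\psi$, which is routine once the homomorphism and definiteness facts are in hand. Everything else is bookkeeping: rewriting T-products as matrix products, matching inverses, and applying $\psi^{-1}$.
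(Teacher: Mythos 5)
Your proposal follows essentially the same route as the paper: unfold the TPDS and the cost via $\psi$, invoke the classical discrete-time LQR theorem for the lifted pair $(\psi(\mathscr{A}),\psi(\mathscr{B}))$, and pull the resulting Riccati solution and gain back through $\psi^{-1}$ using the fact that $\psi$ intertwines the T-product, T-transpose, and T-inverse with their matrix counterparts. The one place you are more careful than the paper is in justifying that the lifted Riccati solution $\mathbf{P}$ is actually block circulant (so that writing $\psi(\mathscr{P})=\mathbf{P}$ is well-posed) via Fourier decoupling into $r$ independent AREs, and in flagging the translation of stabilizability/detectability to the lifted pair; the paper simply sets $\mathscr{P}$ with $\psi(\mathscr{P})=\mathbf{P}$ without comment and defers the decoupling to the subsequent corollary.
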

\vsp 
\begin{proof}
    According to the unfolding representation \eqref{eq:unfold} and the classical LQR framework, if the pair $(\psi(\mathcal A), \psi(\mathcal B))$ is stabilizable, the optimal feedback gain matrix is given by
    $$\textbf{K}=-(\psi(\mathscr{R})+\psi(\mathscr{B})^\top\textbf{P}\psi(\mathscr{B}))^{-1}\psi(\mathscr{B})^\top\textbf{P}\psi(\mathscr{A}),$$ where \textbf{P} is the unique solution to algebraic Riccati equation defined as
\begin{align*}
 \textbf{P}&=\psi(\mathscr{A})^\top\textbf{P}\psi(\mathscr{A}) - \psi(\mathscr{A})^\top\textbf{P}\psi(\mathscr{B})(\psi(\mathscr{R})\\&+\psi(\mathscr{B})^\top\textbf{P}\psi(\mathscr{B}))^{-1}\psi(\mathscr{B})^\top\textbf{P}\psi(\mathscr{A})+\psi(\mathscr{Q}).
\end{align*}
Let $\mathscr P\in\mathbb{R}^{n\times n\times r}$ such that $\psi(\mathscr{P})=\textbf{P}$ and $\mathscr K\in\mathbb{R}^{m\times n\times r}$ such that $\psi(\mathscr{K})=\textbf{K}$. According to the fact that block circulant operation preserves transpose and inverse, it holds that
  \begin{align*}
  \psi(\mathscr K)&=-\psi(\mathscr{R}+\mathscr{B}^\top\star\mathscr P\star\mathscr{B})^{-1}\psi(\mathscr{B})^\top\psi(\mathscr P)\psi(\mathscr{A}),\\
 \psi(\mathscr P)&=\psi(\mathscr{A})^\top\psi(\mathscr P)\psi(\mathscr{A}) - \psi(\mathscr{A})^\top\psi(\mathscr P)\psi(\mathscr{B})\psi(\mathscr{R}\\&+\mathscr{B}^\top\star\mathscr P\star\mathscr{B})^{-1}\psi(\mathscr{B})^\top\psi(\mathscr P)\psi(\mathscr{A})+\psi(\mathscr{Q}),
\end{align*} which further implies
    \begin{align*}
    \mathscr{K}&=-(\mathscr{R}+\mathscr{B}^\top \star\mathscr{P}\star\mathscr{B})^{-1}\star\mathscr{B}^\top\star\mathscr{P}\star\mathscr{A},\\
\mathscr{P}&=\mathscr{A}^\top\star\mathscr{P}\star\mathscr{A}-\mathscr{A}^\top\star\mathscr{P}\star\mathscr{B}\star(\mathscr{R}+\mathscr{B}^\top\star\mathscr{P}\star\mathscr{B})^{-1}\\&\star\mathscr{B}^\top\star\mathscr{P}\star\mathscr{A}+\mathscr{Q}.
\end{align*} 
Therefore, the results follows immediately. 
\end{proof}

\begin{corollary}
Let $\textbf{Q}_j=\textbf{Q}_j^\top$ be positive definite and $\textbf{R}_j=\textbf{R}_j^\top$ be positive semi-definite. If the pairs $(\textbf{A}_j,\textbf{B}_j)$ are stabilizable and the pairs $(\textbf{Q}_j,\textbf{A}_j)$ are detectable, there exists an optimal state feedback from TQR such that the gain tensor  $\mathscr{K}$ can be  computed using \eqref{eq:gain} with 
\begin{equation}
    \textbf{K}_j=(\textbf{R}_j+\textbf{B}_j^\top \textbf{P}_j\textbf{B}_j)^{-1}\textbf{B}_j^\top\textbf{P}_j\textbf{A}_j
\end{equation}
where $\textbf{P}_j$ are the unique solutions to the  algebraic Riccati equations
    \begin{equation}\label{eq:are}
\scalebox{0.9}{$\textbf{P}_j=\textbf{A}_j^\top\textbf{P}_j\textbf{A}_j-\textbf{A}_j^\top\textbf{P}_j\textbf{B}_j(\textbf{R}_j+\textbf{B}_j^\top\textbf{P}_j\textbf{B}_j)^{-1}\textbf{B}_j^\top\textbf{P}_j\textbf{A}_j+\textbf{Q}_j$}
    \end{equation}
    for $j=1,2,\dots, r$.
\end{corollary}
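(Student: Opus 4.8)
The plan is to mirror the reductions already carried out for Corollaries \ref{coro:sysid2} and \ref{coro:statefeedback2}: start from the unfolding-level statement established inside the proof of Proposition \ref{prop:tqr}, and then block-diagonalize everything with the Fourier transform $\mathcal{F}\{\psi(\cdot)\}$ so that the single large algebraic Riccati equation decouples into the $r$ matrix Riccati equations \eqref{eq:are}.

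First I would record that, by the proof of Proposition \ref{prop:tqr}, stabilizability of $(\psi(\mathscr{A}),\psi(\mathscr{B}))$ together with detectability of $(\psi(\mathscr{Q}),\psi(\mathscr{A}))$ yields a unique stabilizing solution $\textbf{P}=\psi(\mathscr{P})$ of the $nr\times nr$ algebraic Riccati equation and the corresponding $\textbf{K}=\psi(\mathscr{K})$. Applying $\mathcal{F}\{\psi(\cdot)\}$, which acts as a unitary similarity transformation, sends $\psi(\mathscr{A}),\psi(\mathscr{B}),\psi(\mathscr{Q}),\psi(\mathscr{R})$ to $\texttt{blkdiag}(\textbf{A}_1,\dots,\textbf{A}_r)$, $\texttt{blkdiag}(\textbf{B}_1,\dots,\textbf{B}_r)$, $\texttt{blkdiag}(\textbf{Q}_1,\dots,\textbf{Q}_r)$, and $\texttt{blkdiag}(\textbf{R}_1,\dots,\textbf{R}_r)$. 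Since the Riccati recursion involves only sums, products, transposes, and inverses of these matrices, the transformed equation is itself block diagonal and hence equivalent to the $r$ decoupled equations \eqref{eq:are}, with $\textbf{P}_j$ the $j$th diagonal block of $\mathcal{F}\{\psi(\mathscr{P})\}$; likewise $\textbf{K}_j$ is the $j$th block of $\mathcal{F}\{\psi(\mathscr{K})\}$, which is precisely the content of \eqref{eq:gain}.

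Next I would verify the transfer of the hypotheses. Because a unitary similarity transformation preserves positive (semi-)definiteness, $\mathscr{Q}$ being T-positive semi-definite and $\mathscr{R}$ being T-positive definite force each $\textbf{Q}_j\succeq 0$ and $\textbf{R}_j\succ 0$, matching the argument used earlier for block circulant matrices under the Fourier transform. For the structural conditions, a block-diagonal pair is stabilizable (respectively detectable) if and only if every diagonal block is stabilizable (respectively detectable), which follows block-wise from the Hautus test exactly as in the reasoning behind Corollary \ref{coro:statefeedback2}; hence stabilizability of all $(\textbf{A}_j,\textbf{B}_j)$ and detectability of all $(\textbf{Q}_j,\textbf{A}_j)$ is equivalent to the corresponding properties of $(\psi(\mathscr{A}),\psi(\mathscr{B}))$ and $(\psi(\mathscr{Q}),\psi(\mathscr{A}))$. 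Invoking the classical LQR theorem slice by slice then furnishes a unique stabilizing $\textbf{P}_j$ and the stated $\textbf{K}_j$, and reassembling through \eqref{eq:gain} recovers $\mathscr{K}$.

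The main obstacle I anticipate is the bookkeeping needed to make the decoupling fully rigorous: one must check that the unique stabilizing solution of the $nr\times nr$ Riccati equation is exactly the block-diagonal collection of the unique stabilizing solutions of the $n\times n$ blocks, i.e.\ that uniqueness is inherited in both directions, and that the closed-loop spectrum in the Fourier domain splits as the union of the blocks' spectra, so that ``stabilizing'' at the tensor level coincides with ``stabilizing'' slice by slice. Once this is in place, the remainder is the same routine commutation of $\mathcal{F}\{\psi(\cdot)\}$ with transpose, inverse, and products that was already exploited in Proposition \ref{prop:tqr} and the preceding corollaries.
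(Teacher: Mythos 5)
Your proposal is correct and follows essentially the same route as the paper: apply the Fourier transform to the block circulant unfolding so that the T-algebraic Riccati equation and the gain formula from Proposition \ref{prop:tqr} decouple into the $r$ slice-wise equations \eqref{eq:are}, with $\textbf{P}_j$ and $\textbf{K}_j$ the diagonal blocks of $\mathcal{F}\{\psi(\mathscr{P})\}$ and $\mathcal{F}\{\psi(\mathscr{K})\}$. You are in fact somewhat more careful than the paper, which does not explicitly verify that stabilizability, detectability, and uniqueness of the stabilizing solution transfer block-wise under the unitary similarity.
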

\begin{proof}
 By leveraging the distributive and commutative properties of the Fourier transform with respect to matrix transpose and inverse operations on block circulant matrices, the T-algebraic Riccati equation from Proposition \ref{prop:tqr} implies 
\begin{equation*}
\footnotesize
\begin{split}
\mathcal{F}\{\psi(\mathscr{P})\}&=\mathcal{F}\{\psi(\mathscr{A})\}^\top \mathcal{F}\{\psi(\mathscr{P})\}\mathcal{F}\{\psi(\mathscr{A})\}\!-\!\mathcal{F}\{\psi(\mathscr{A})\}^\top\mathcal{F}\{\psi(\mathscr{P})\}\\&\mathcal{F}\{\psi(\mathscr{B})\}(\mathcal{F}\{\psi(\mathscr{R}\} +\mathcal F\{\psi(\mathscr{B})\}^\top\mathcal F\{\psi(\mathscr{P} )\}\mathcal F\{\psi(\mathscr{B})\})^{-1} \\&\mathcal{F}\{\psi(\mathscr{B})\}^\top\mathcal{F}\{\psi(\mathscr{P})\}\mathcal{F}\{\psi(\mathscr{A})\}+\mathcal{F}\{\psi(\mathscr{Q})\},
\end{split}
\end{equation*}
Therefore, it can be decoupled into $r$  algebraic Riccati equations for $(\textbf{A}_j,\textbf{B}_j,\textbf{Q}_j,\textbf{R}_j)$, where the solutions $\textbf{P}_j$ are the block diagonal matrices of $\mathcal F\{\psi(\mathscr{P})\}$, expressed as
 \begin{equation*}
\scalebox{1}{$\textbf{P}_j=\textbf{A}_j^\top\textbf{P}_j\textbf{A}_j-\textbf{A}_j^\top\textbf{P}_j\textbf{B}_j(\textbf{R}_j+\textbf{B}_j^\top\textbf{P}_j\textbf{B}_j)^{-1}\textbf{B}_j^\top\textbf{P}_j\textbf{A}_j+\textbf{Q}_j.$}
    \end{equation*}
Moreover, the feedback gain tensor is derived similarly as 
 \begin{equation*}
 \small
 \begin{split}
   \mathcal F\{\psi(\mathscr K)\}\!=&-(\mathcal F\{\psi(\mathscr{R})\}\!+\mathcal F\{\!\psi(\mathscr{B})\}^\top\mathcal F\{\psi(\mathscr P)\}\mathcal F\{\psi(\mathscr{B})\})^{-1}\\
    &\mathcal F\{\psi(\mathscr{B})\}\mathcal F\{\psi(\mathscr P)\}\mathcal F\{\psi(\mathscr{A})\}.
    \end{split}
\end{equation*}
    It follows that the feedback gain matrices $\textbf{K}_j$, which are diagonal block  matrices of $\mathcal F\{\psi(\mathscr{K})\}$ can be expressed as 
    \begin{equation*}
    \textbf{K}_j=(\textbf{R}_j+\textbf{B}_j^\top \textbf{P}_j\textbf{B}_j)^{-1}\textbf{B}_j^\top\textbf{P}_j\textbf{A}_j.
\end{equation*}
Therefore, the result follows immediately.
\end{proof}

Next, we investigate the data informativity for TQR of the controlled TPDS \eqref{eq:tpdsc}.

\begin{definition}[TQR] Given tensors $\mathscr{Q}$ and $\mathscr{R}$,
    we say the data $(\mathscr{V},\mathscr{Y}, \mathscr{Z})$ are informative for TQR if there exists an optimal feedback gain tensor $\mathscr{K}\in\mathbb{R}^{m\times n\times r}$ from TQR for any pair $(\mathscr{A},\mathscr{B})$ identified from the data. 
\end{definition}

\begin{proposition}\label{prop:dtqr}
    Let $\mathscr{Q}=\mathscr{Q}^\top\in\mathbb{R}^{n\times n\times r}$ is T-positive semi-definite and $\mathscr{R}=\mathscr{R}^\top\in\mathbb{R}^{m\times m\times r}$ is T-positive definite. The data $(\mathscr{V},\mathscr{Y},\mathscr{Z})$ are informative for TQR if and only if at least one of the following two conditions hold:
    \begin{itemize}
        \item[(i)] All entries of the singular tuples of the block tensor $[\mathscr{Y} \text{ } \mathscr{V}]^\top\in\mathbb{R}^{(n+m)\times lh\times r}$ are nonzero. Moreover, the TQR is solvable for $(\mathscr{A}^s,\mathscr{B}^s,\mathscr{Q},\mathscr{R})$ where $\mathscr{A}^s$ and $\mathscr{B}^s$ are the unique solutions from system identification.
    \item[(ii)] There exists an tensor $\mathscr{S}\in\mathbb{R}^{lh\times n\times r}$ such that   $\mathscr{Y}\star\mathscr{S}=\mathscr{S}^\top\star\mathscr{Y}^\top$, $\mathscr{V}\star\mathscr{S}=\mathscr{O}$, $\mathscr{Q}\star\mathscr{Z}\star\mathscr{S}=\mathscr{O}$, and all entries of the eigentuples of the block tensor \eqref{eq:blkmtx} are positive. 
    \end{itemize}
\end{proposition}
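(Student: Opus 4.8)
The plan is to reduce the tensor statement to the known characterization of data informativity for LQR in the linear (unfolded) setting, exactly as was done for system identification (Proposition~\ref{prop:sysid}) and stabilization (Proposition~\ref{prop:statefeedback}). First I would recall from linear systems theory that the data $(\psi(\mathscr{V}),\psi(\mathscr{Y}),\psi(\mathscr{Z}))$ are informative for LQR if and only if one of two conditions holds: either the stacked matrix $[\psi(\mathscr{Y})\text{ }\psi(\mathscr{V})]^\top$ has full row rank $(n+m)r$ and the LQR problem is solvable for the unique $(\psi(\mathscr{A}),\psi(\mathscr{B}))$ identified from the data; or there exists $\textbf{S}\in\mathbb{R}^{lhr\times nr}$ with $\psi(\mathscr{Y})\textbf{S}=\textbf{S}^\top\psi(\mathscr{Y})^\top$, $\psi(\mathscr{V})\textbf{S}=0$, $\psi(\mathscr{Q})\psi(\mathscr{Z})\textbf{S}=0$, and the $2nr\times 2nr$ block matrix obtained by substituting $\psi(\cdot)$ into \eqref{eq:blkmtx} is positive definite. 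This is the TPDS analogue of the corresponding result in \cite{van2020data}, applied to the unfolded system \eqref{eq:unfold}.

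Next I would translate each of the two conditions back into T-product language. For condition (i), the full-rank condition on $[\psi(\mathscr{Y})\text{ }\psi(\mathscr{V})]^\top$ is equivalent, by the argument already given in the proof of Proposition~\ref{prop:sysid}, to all entries of the singular tuples of $[\mathscr{Y}\text{ }\mathscr{V}]^\top$ being nonzero; and solvability of the unfolded LQR problem for $(\psi(\mathscr{A}^s),\psi(\mathscr{B}^s),\psi(\mathscr{Q}),\psi(\mathscr{R}))$ is, by Proposition~\ref{prop:tqr} and the block-circulant-preserves-transpose-and-inverse fact, equivalent to solvability of TQR for $(\mathscr{A}^s,\mathscr{B}^s,\mathscr{Q},\mathscr{R})$. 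For condition (ii), I would set $\mathscr{S}=\psi^{-1}(\textbf{S})$ and use that $\psi$ is a ring homomorphism intertwining matrix product with $\star$ and matrix transpose with T-transpose: then $\psi(\mathscr{Y})\textbf{S}=\textbf{S}^\top\psi(\mathscr{Y})^\top$ becomes $\mathscr{Y}\star\mathscr{S}=\mathscr{S}^\top\star\mathscr{Y}^\top$ (as in Proposition~\ref{prop:statefeedback}); $\psi(\mathscr{V})\textbf{S}=0$ becomes $\mathscr{V}\star\mathscr{S}=\mathscr{O}$ since $\psi$ is injective and $\psi(\mathscr{O})=0$; $\psi(\mathscr{Q})\psi(\mathscr{Z})\textbf{S}=0$ becomes $\mathscr{Q}\star\mathscr{Z}\star\mathscr{S}=\mathscr{O}$ likewise; and positive definiteness of the unfolded block matrix is by definition T-positive definiteness of \eqref{eq:blkmtx}, i.e.\ all entries of its eigentuples in the Fourier domain are positive. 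Finally I would note that the feedback gain can be recovered by applying $\psi^{-1}$ to the matrix formulas, mirroring the last step of Proposition~\ref{prop:statefeedback}.

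The routine parts are the homomorphism identities for $\psi$, which have already been used repeatedly in the preceding propositions, so I would cite those rather than re-derive them. The main obstacle is justifying the disjunctive (``at least one of'') structure: one must argue why informativity for TQR splits precisely into the identification-then-solve case (i) and the genuinely data-driven case (ii), and that these exhaust all possibilities. This requires invoking the dichotomy in the linear informativity analysis of \cite{van2020data}---namely that either the data pin down $(\mathscr{A},\mathscr{B})$ uniquely (so one is back to model-based TQR), or they do not, in which case a single $\mathscr{S}$ (equivalently $\mathscr{K}=\mathscr{V}\star\mathscr{S}\star(\mathscr{Y}\star\mathscr{S})^{-1}$) must simultaneously be optimal for every consistent $(\mathscr{A},\mathscr{B})$, which forces $\mathscr{V}\star\mathscr{S}=\mathscr{O}$ and $\mathscr{Q}\star\mathscr{Z}\star\mathscr{S}=\mathscr{O}$ together with the Lyapunov-type positivity in \eqref{eq:blkmtx}. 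Care is also needed to confirm that the hypotheses on $\mathscr{Q},\mathscr{R}$ (T-positive semi-definite and T-positive definite) transfer correctly to $\psi(\mathscr{Q}),\psi(\mathscr{R})$ so that the linear LQR theory applies slice-wise; this follows from the stated fact that $\mathscr{T}$ is T-positive definite if and only if $\psi(\mathscr{T})$ is positive definite, and analogously for the semi-definite case.
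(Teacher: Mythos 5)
Your proposal follows essentially the same route as the paper: invoke the linear data-informativity characterization for LQR on the unfolded system \eqref{eq:unfold}, then translate condition (i) back via the singular-tuple argument of Proposition~\ref{prop:sysid} and condition (ii) via the $\psi$-homomorphism argument of Proposition~\ref{prop:statefeedback}. The paper's proof is just a terser version of this (it cites the linear result and says the translation proceeds ``in a manner similar to Proposition~\ref{prop:statefeedback}''), so your additional remarks on the disjunctive structure and on transferring the definiteness hypotheses are elaboration rather than a different argument.
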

\begin{proof}
Based on the data informativity for LQR of linear systems and the unfolding representation \eqref{eq:unfold},  the  data are informative for TQR  if and only if at least one of the following two conditions hold: (i) the data are informative for system identification, and LQR problem is solvable for $(\psi(\mathscr{A}^s),\psi(\mathscr{B}^s),\psi(\mathscr{Q}),\psi(\mathscr{R}))$; (ii) There exist $\textbf{S}\in\mathbb{R}^{lhr\times nr}$ such that  $\psi(\mathscr{V})\textbf{S}=\textbf{0}$, $\psi(\mathscr{Q})\psi(\mathscr{Z})\textbf{S}=\textbf{0}$, and \eqref{eq:statecond} holds. Therefore, the two conditions can be proven in a manner similar to Proposition \ref{prop:statefeedback}.
\end{proof}

Similar to the previous results, we can decouple the problem into $r$ sub-problems in the Fourier domain.

\begin{corollary}\label{coro:dtqr}
    Let $\textbf{Q}_j=\textbf{Q}_j^\top$ be positive definite and $\textbf{R}_j=\textbf{R}_j^\top$ be positive semi-definite. The data $(\mathscr{V},\mathscr{Y},\mathscr{Z})$ are informative for TQR if and only if at least one of the following two conditions hold:
    \begin{itemize}
        \item[(i)] The ranks of $[\textbf{Y}_j \text{ } \textbf{V}_j]^\top$ are equal to $n+m$, and the LQR is solvable for $(\textbf{A}^s_j,\textbf{B}^s_j,\textbf{Q}_j,\textbf{R}_j)$ where $\textbf{A}^s_j$ and $\textbf{B}^s_j$ are the unique solutions from system identification for $j=1,2,\dots,r$.
    \item[(ii)] There exist matrices $\textbf{S}_1, \textbf{S}_2,\dots,\textbf{S}_r\in\mathbb{R}^{lh\times n}$ such that  $\textbf{Y}_j\textbf{S}_j=\textbf{S}_j^\top\textbf{Y}_j^\top$, $\textbf{V}_j\textbf{S}_j=\textbf{0}$, $\textbf{Q}_j\textbf{Z}_j\textbf{S}_j=\textbf{0}$, and
    \begin{equation*}
    \begin{bmatrix}
        \textbf{Y}_j\textbf{S}_j & \textbf{Z}_j\textbf{S}_j\\
        \textbf{S}_j^\top \textbf{Z}_j^\top & \textbf{Y}_j\textbf{S}_j
    \end{bmatrix}\succ 0        
    \end{equation*}
    for $j=1,2,\dots,r$.
    \end{itemize}
\end{corollary}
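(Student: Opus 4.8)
The plan is to reproduce, for the two alternatives of Proposition~\ref{prop:dtqr}, the Fourier-domain decoupling already carried out for Corollary~\ref{coro:sysid2} and Corollary~\ref{coro:statefeedback2}. I would start from the unfolded characterization behind Proposition~\ref{prop:dtqr}: through \eqref{eq:unfold}, the data are informative for TQR if and only if either (i$'$) $[\psi(\mathscr{Y}) \text{ } \psi(\mathscr{V})]^\top$ has full row rank $(n+m)r$ and the LQR problem is solvable for the uniquely identified quadruple $(\psi(\mathscr{A}^s),\psi(\mathscr{B}^s),\psi(\mathscr{Q}),\psi(\mathscr{R}))$, or (ii$'$) there exists $\textbf{S}\in\mathbb{R}^{lhr\times nr}$ with $\psi(\mathscr{Y})\textbf{S}=\textbf{S}^\top\psi(\mathscr{Y})^\top$, $\psi(\mathscr{V})\textbf{S}=\textbf{0}$, $\psi(\mathscr{Q})\psi(\mathscr{Z})\textbf{S}=\textbf{0}$, and the block matrix in \eqref{eq:statecond} being positive definite.

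Then I would apply $\mathcal{F}\{\cdot\}=(\textbf{F}_n\otimes\textbf{I}_r)(\cdot)(\textbf{F}_n^*\otimes\textbf{I}_r)$ to each condition, using that it block-diagonalizes every block circulant matrix and, as recalled in Section~\ref{sec:prelim} and exploited in Corollary~\ref{coro:statefeedback}, commutes with transposition, inversion and products while preserving positive (semi-)definiteness. For (i$'$): Corollary~\ref{coro:sysid} already turns the rank condition into $\mathrm{rank}\,[\textbf{Y}_j \text{ } \textbf{V}_j]^\top=n+m$ for all $j$, in which case $\textbf{A}_j^s$ and $\textbf{B}_j^s$ are precisely the diagonal blocks of $\mathcal{F}\{\psi(\mathscr{A}^s)\}$ and $\mathcal{F}\{\psi(\mathscr{B}^s)\}$; since stabilizability of $(\mathscr{A}^s,\mathscr{B}^s)$ and detectability of $(\mathscr{Q},\mathscr{A}^s)$ reduce to the analogous properties of the blocks, and the T-algebraic Riccati equation decouples as in the corollary following Proposition~\ref{prop:tqr}, solvability of LQR for the unfolded quadruple is equivalent to solvability of LQR for every $(\textbf{A}_j^s,\textbf{B}_j^s,\textbf{Q}_j,\textbf{R}_j)$. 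For (ii$'$): writing $\textbf{S}_1,\dots,\textbf{S}_r$ for the diagonal blocks of $\mathcal{F}\{\psi(\mathscr{S})\}$, the relations $\psi(\mathscr{Y})\textbf{S}=\textbf{S}^\top\psi(\mathscr{Y})^\top$, $\psi(\mathscr{V})\textbf{S}=\textbf{0}$, $\psi(\mathscr{Q})\psi(\mathscr{Z})\textbf{S}=\textbf{0}$ decouple blockwise into $\textbf{Y}_j\textbf{S}_j=\textbf{S}_j^\top\textbf{Y}_j^\top$, $\textbf{V}_j\textbf{S}_j=\textbf{0}$, $\textbf{Q}_j\textbf{Z}_j\textbf{S}_j=\textbf{0}$, and positive definiteness of the block matrix in \eqref{eq:statecond} is equivalent to positive definiteness of each of its $2n\times 2n$ diagonal blocks, exactly as in the proof of Corollary~\ref{coro:statefeedback}. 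The converse directions follow by reassembling $\textbf{S}=\mathcal{F}^{-1}\{\texttt{blkdiag}(\textbf{S}_1,\dots,\textbf{S}_r)\}$ and $\mathscr{S}=\psi^{-1}(\textbf{S})$ and running the same algebra backwards.

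It remains to note that the decoupling respects the disjunction: (i$'$) is equivalent to ``condition (i) of the corollary holds for every $j$'' and (ii$'$) to ``condition (ii) holds for every $j$'', so ``(i$'$) or (ii$'$)'' is literally ``(i)-for-all-$j$ or (ii)-for-all-$j$'', and no genuine mixing of the two alternatives across different indices can occur. I expect the only delicate point to be this last bookkeeping together with the claim that LQR solvability for the unfolded system is exactly the conjunction over $j$ of LQR solvability for the diagonal blocks; this in turn rests on the decoupling of stabilizability, detectability and the Riccati equation, all of which are already available from the results preceding this corollary, so the remainder of the argument is routine.
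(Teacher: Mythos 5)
Your proposal is correct and follows essentially the same route as the paper, which simply invokes the decoupling arguments of Corollaries \ref{coro:sysid} and \ref{coro:statefeedback} applied to the two alternatives of Proposition \ref{prop:dtqr}. Your write-up is in fact more careful than the paper's one-line proof, notably in checking that the disjunction does not mix the two alternatives across different Fourier indices $j$ and that LQR solvability for the unfolded system decouples blockwise via stabilizability, detectability, and the Riccati equation.
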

\begin{proof}
    The result can be formulated in a manner similar to Corollaries \ref{coro:sysid} and \ref{coro:statefeedback}.
\end{proof}

\begin{corollary}\label{coro:dtqr2}
    The data $(\mathscr{V},\mathscr{Y},\mathscr{Z})$ are informative for TQR if and only if the data $(\textbf{V}_j,\textbf{Y}_j,\textbf{Z}_j)$ are informative for LQR for $j=1,2,\dots,r$.
\end{corollary}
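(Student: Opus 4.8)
The plan is to reduce the claim to the slice-wise conditions already obtained, exactly in the spirit of Corollaries~\ref{coro:sysid2} and \ref{coro:statefeedback2}. I would argue directly from the definitions, exploiting the fact that the block-circulant operator composed with the discrete Fourier transform is a bijection between third-order tensors and tuples of their Fourier-domain diagonal blocks. Under this bijection, a pair $(\mathscr{A},\mathscr{B})$ is consistent with $(\mathscr{V},\mathscr{Y},\mathscr{Z})$ in the sense of \eqref{eq:unfold} if and only if $\textbf{Z}_j = [\textbf{A}_j\ \textbf{B}_j][\textbf{Y}_j^\top\ \textbf{V}_j^\top]^\top$ for every $j$; since these constraints are decoupled across $j$, the set of dynamic tensors consistent with the data is in bijection with the Cartesian product, over $j=1,\dots,r$, of the sets of linear systems consistent with $(\textbf{V}_j,\textbf{Y}_j,\textbf{Z}_j)$.

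Next I would invoke Proposition~\ref{prop:tqr} together with its corollary: for a fixed consistent pair, TQR is solvable for $(\mathscr{A},\mathscr{B},\mathscr{Q},\mathscr{R})$ if and only if LQR is solvable for each $(\textbf{A}_j,\textbf{B}_j,\textbf{Q}_j,\textbf{R}_j)$, in which case the optimal tensor gain is assembled from the slice gains $\textbf{K}_j$ via \eqref{eq:gain}, and stability of $\mathscr{A}-\mathscr{B}\star\mathscr{K}$ is equivalent to stability of every $\textbf{A}_j-\textbf{B}_j\textbf{K}_j$, because the eigentuples of a third-order tensor in the Fourier domain are precisely the eigenvalues of its diagonal blocks. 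Chaining these equivalences gives: $(\mathscr{V},\mathscr{Y},\mathscr{Z})$ informative for TQR $\Leftrightarrow$ for every consistent $(\mathscr{A},\mathscr{B})$ there is an optimal $\mathscr{K}$ $\Leftrightarrow$ for every $j$ and every $(\textbf{A}_j,\textbf{B}_j)$ consistent with $(\textbf{V}_j,\textbf{Y}_j,\textbf{Z}_j)$ there is an optimal $\textbf{K}_j$ $\Leftrightarrow$ $(\textbf{V}_j,\textbf{Y}_j,\textbf{Z}_j)$ informative for LQR for $j=1,\dots,r$, which is the statement. Alternatively, one may start from Corollary~\ref{coro:dtqr} and simply observe that its slice-$j$ alternative---full rank of $[\textbf{Y}_j\ \textbf{V}_j]^\top$ with LQR solvable for the identified slice system, or existence of $\textbf{S}_j$ with the stated kernel and positivity properties---is verbatim the linear-systems characterization of LQR informativity of $(\textbf{V}_j,\textbf{Y}_j,\textbf{Z}_j)$ from \cite{van2020data}.

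The step that needs the most care is the interplay of quantifiers. For the ``$\Leftarrow$'' direction one must verify that a consistent dynamic tensor cannot encode more coupling than its slices---this is precisely where bijectivity and the decoupling of the data constraints across $j$ are used---and if one routes the argument through Corollary~\ref{coro:dtqr} one must be explicit that its disjunction ``(i) or (ii)'' is to be read \emph{per slice}: slices for which $[\textbf{Y}_j\ \textbf{V}_j]^\top$ has full rank fall under the first alternative and the remaining slices under the second, so the condition is in fact $\forall j:\ (\mathrm{i})_j \vee (\mathrm{ii})_j$, which matches ``$(\textbf{V}_j,\textbf{Y}_j,\textbf{Z}_j)$ informative for LQR for all $j$.'' Once this bookkeeping is settled both implications are immediate, and the converse is symmetric, as in Corollaries~\ref{coro:sysid2} and \ref{coro:statefeedback2}.
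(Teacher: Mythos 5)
Your proposal is correct, and your second route coincides with the paper's own (very short) proof, which simply observes that informativity for TQR is equivalent to the conditions of Corollary~\ref{coro:dtqr} and that, slice by slice, these are verbatim the LQR-informativity conditions of \cite{van2020data} for $(\textbf{V}_j,\textbf{Y}_j,\textbf{Z}_j)$. What you add, and what the paper's one-line argument elides, is the quantifier bookkeeping: as literally written, Corollary~\ref{coro:dtqr} asserts the disjunction globally, i.e.\ $(\forall j:(\mathrm{i})_j)\vee(\forall j:(\mathrm{ii})_j)$, which is strictly stronger than the per-slice disjunction $\forall j:\ (\mathrm{i})_j\vee(\mathrm{ii})_j$ that actually characterizes ``each slice informative for LQR''; your remark that the disjunction must be read per slice (some slices may satisfy the rank condition, others only the kernel-and-positivity condition) is precisely the point at which the stated equivalence could otherwise fail. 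Your first route---the bijection between tensor pairs consistent with the data and Cartesian products of consistent slice pairs, combined with Proposition~\ref{prop:tqr} and the slice-wise characterization of stability of $\mathscr{A}-\mathscr{B}\star\mathscr{K}$---gives a clean justification that bypasses this ambiguity entirely, and is the more careful of the two arguments; the only point it shares with the paper and leaves implicit is that assembling a real gain tensor from the slice gains via \eqref{eq:gain} relies on the conjugate symmetry of the Fourier-domain blocks.
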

\begin{proof}
    If the data $(\mathscr{V},\mathscr{Y},\mathscr{Z})$ are informative for TQR, the two conditions in Corollary \ref{coro:dtqr} are met, which implies that the data $(\textbf{V}_j,\textbf{Y}_j,\textbf{Z}_j)$ are informative for LQR for $j=1,2,\dots,r$. The converse can be shown similarly. 
\end{proof}

Based on Corollary \ref{coro:dtqr2}, we can compute the optimal feedback gain tensor $\mathscr{K}$ from the data $(\mathscr{V},\mathscr{Y},\mathscr{Z})$ by applying the data-driven LQ gain algorithm from \cite{van2020data} to find the unique solutions $\textbf{P}_j$ to the algebraic Riccati equations \eqref{eq:are} using the data $(\textbf{V}_j,\textbf{Y}_j,\textbf{Z}_j)$ for $j=1,2,\dots,r$. Specifically, the unique solutions $\textbf{P}_j^*$ are equal to the solutions of the optimization problems $\max{\{\text{Trace}(\textbf{P}_j)\}}$ subject to $\textbf{P}_j=\textbf{P}_j^\top\succeq 0$ and $\mathcal{L}_j(\textbf{P}_j)\preceq 0$ where 
\begin{equation}
    \mathcal{L}_j(\textbf{P}_j) = \textbf{Y}_j^\top \textbf{P}_j\textbf{Y}_j-\textbf{Z}_j^\top\textbf{P}_j\textbf{Z}_j-\textbf{Y}_j^\top\textbf{Q}_j\textbf{Y}_j - \textbf{V}_j^\top
\end{equation}
If there exist right inverses of $\textbf{Y}_j$ such that $\mathcal{L}_j(\textbf{P}_j^{*})\textbf{Y}_j^{\dagger}=\textbf{0}$, the optimal feedback gains are given by $\textbf{K}_j=\textbf{V}_j\textbf{Y}_j^{\dagger}$ for $j=1,2,\dots,r$. Finally, the optimal feedback gain tensor $\mathscr{K}$ from TQR can be constructed using \eqref{eq:gain}.

\begin{remark}
    Similar to Remark \ref{re:2}, the computational complexity of solving the optimal feedback gain tensor from TQR using our proposed approach is estimated as  
    \begin{equation}
        \mathcal{O}(n^2hlr+nh^2l^2r+h^6l^6r\log{\frac{1}{\epsilon}} + n^6r\log{\frac{1}{\epsilon}}). 
    \end{equation}
In contrast, the unfolding-based method requires 
\begin{equation*}
    \mathcal{O}(n^2hlr^3+nh^2l^2r^3 + n^6l^6r^6\log{\frac{1}{\epsilon}} + n^6r^6\log{\frac{1}{\epsilon}})
\end{equation*}
number of operations. Therefore, our approach provides a significant time savings in determining the solution to TQR from the data.
\end{remark}

\section{Numerical Examples}\label{sec:examples}
We now illustrate our framework through the following numerical experiments. All experiments in this section were conducted on a system with an M1 Pro CPU and 16GB of memory. The code for these experiments is available at \url{https://github.com/ZiqinHe/TPDSn}.

\subsection{Stabilization by State Feedback}
In this example, we wanted to determine whether the data $(\mathscr{V},\mathscr{Y},\mathscr{Z})$ are informative for stabilization by state feedback. We generated the state-input data $\mathscr{V}\in\mathbb{R}^{2\times 6\times 2}$, $\mathscr{Y}\in\mathbb{R}^{2\times 6\times 2}$, and $\mathscr{Z}\in\mathbb{R}^{2\times 6\times 2}$ as follows: 
\begin{align*}
    \mathscr{V}_{::1} &= 
    \scalebox{0.8}{$\begin{bmatrix}
        0 & 0 & 0 & 1 & 1 & 0\\
        1 & 1 & 0 & 0 & 0 & 0
    \end{bmatrix}$},\text{ }\mathscr{V}_{::2}= 
    \scalebox{0.8}{$\begin{bmatrix}
        0 & 0 & 0 & 1 & 1 & 0\\
        1 & 1 & 0 & 0 & 0 & 0
    \end{bmatrix}$},\\
    \mathscr{Y}_{::1} &= \scalebox{0.8}{$
    \begin{bmatrix}
        0 & 1 & 1 & 2 & 2 & 6\\
        1 & 1 & 1 & 2 & 0 & 2
    \end{bmatrix}$},\text{ }\mathscr{Y}_{::2}=
    \scalebox{0.8}{$\begin{bmatrix}
        0 & 0 & 1 & 2 & 2 & 5\\
        1 & 0 & 2 & 2 & 0 & 2
    \end{bmatrix}$},\\
    \mathscr{Z}_{::1} &= 
    \scalebox{0.8}{$\begin{bmatrix}
        1 & 2 & 2 & 6 & 4 & 11\\
        1 & 2 & 0 & 2 & 1 & 2
    \end{bmatrix}$},\text{ }\mathscr{Z}_{::2}= 
    \scalebox{0.8}{$\begin{bmatrix}
        1 & 2 & 2 & 5 & 5 & 11\\
        2 & 2 & 0 & 2 & 0 & 0
    \end{bmatrix}$}.
\end{align*}
Based on Corollary \ref{coro:statefeedback}, we decoupled the problem into two separate linear data informativity for stabilization tasks. By solving the associated LMI positive definite problems, we obtained
\begin{equation*}
    \textbf{S}_1 = 
    \begin{bmatrix}
        -75.881 & 4.266 \\
83.510 &  -2.036 \\
41.754 &  10.304 \\
-19.615 & -0.196 \\
-27.944 & -3.870 \\
5.362 &   -0.210
    \end{bmatrix}, \text{ }
    \textbf{S}_2 = 
    \begin{bmatrix}
        13.349 & -15.458 \\
1.957 &  13.126 \\
11.541 & -38.151 \\
2.167 &  -0.803 \\
1.845 &  -3.861 \\
13.527 &  30.897
    \end{bmatrix},
\end{equation*}
and the feedback gain matrices
\begin{equation*}
    \textbf{K}_1 = 
    \begin{bmatrix}
        -4 & 0\\ 2 & 0
    \end{bmatrix}, \text{ }
    \textbf{K}_1 = 
    \begin{bmatrix}
        0 & 0\\ 0 & 0
    \end{bmatrix}.
\end{equation*}
Therefore, the optimal feedback gain tensor can be computed using the inverse discrete Fourier transform of $\texttt{blkdiag}(\textbf{K}_1,\textbf{K}_2)$, which is given by
\begin{equation*}
    \mathscr{K}_{::1} = \begin{bmatrix}
        -2 & 0 \\ 1 & 0
    \end{bmatrix}, \text{ }
    \mathscr{K}_{::2} = \begin{bmatrix}
        -2 & 0 \\ 1 & 0
    \end{bmatrix}.
\end{equation*}
To validate the designed state feedback, we applied it to a TPDS satisfying the given data, successfully stabilizing its unstable dynamics.

\begin{table}[t]
\centering
\caption{Computational time (in seconds) comparison between T-product-based and unfolding-based methods in determining data informativity for stabilization by state feedback, with mean and standard deviation reported.}
\renewcommand{\arraystretch}{1.5}
\normalsize
\begin{tabular}{|c|c|c|}
\hline
$p$ & T-product-based  & Unfolding-based  \\ \hline
1 & $0.857 \pm  0.249$  & $0.975 \pm  1.821$  \\ \hline
2 & $1.530  \pm  0.095$ & $0.521  \pm  0.096$ \\ \hline
3 & $3.169  \pm  0.293$ & $0.812  \pm  0.097$ \\ \hline
4 & $5.683  \pm  0.418$ &  $2.652  \pm  0.351 $\\ \hline
5 & $10.263 \pm  1.30$ & $36.195 \pm  2.898$ \\ \hline
6 & $20.381  \pm  3.17$ & $119.181  \pm  14.33$ \\ \hline
7 & $37.308  \pm  0.287$ & Failed \\ \hline
8 & $78.647  \pm  4.64$ & Failed \\ \hline
9 & $147.19  \pm  0.442$ & Failed \\ \hline
10 & $295.28  \pm  1.71$ & Failed \\ \hline
\end{tabular}
\label{table:example2}
\end{table}

\subsection{Time Comparison for Stabilization}
In this example, we evaluated the computational time of our approach for assessing data informativity for stabilization using Corollary \ref{coro:statefeedback2}, compared to the unfolding-based counterpart. We generated state-input data $(\mathscr{V},\mathscr{Y},\mathscr{Z})$ of the size $2\times 4\times r$ where $r=2^p$ for $p=1,2,\dots, 10$ and reported the mean and standard deviation of 5 runs for each approach of each value $p$. The results are shown in Table \ref{table:example2}. For small values of $p$, the unfolding-based method shows relatively low computational times. However, as $p$ increases, the T-product-based approach becomes significantly more efficient and robust, while the computational time for the unfolding-based method grows exponentially, accompanied by increasing variability. In particular, for 
$p>6$, the unfolding-based method fails, whereas the T-product-based approach continues to perform effectively. This result demonstrates the superior scalability of the T-product-based method for high-dimensional problems in determining the data informativity for stabilization by state feedback of controlled TPDSs.

\subsection{Time Comparison for TQR}

In this example, we evaluated the computational time of our approach for computing the optimal feedback gain from TQR using Corollary \ref{coro:dtqr2}, compared to the unfolding-based counterpart. We generated state-input data $(\mathscr{V},\mathscr{Y},\mathscr{Z})$ of the size $2\times 4\times r$ where $r=2^p$ for $p=1,2,\dots, 10$ and reported the mean and standard deviation of 5 runs for each approach of each value $p$. The results, presented in Table \ref{table:example3}, show a similar pattern to those in the last example. For small values of $p$, the unfolding-based method performs competitively with the T-product-based approach, with both methods exhibiting relatively low computational times,  while the T-product-based approach also shows lower variability. However, as 
$p$ increases, the unfolding-based method faces significant computational challenges, with the runtime growing rapidly. When $p=5$, it requires around $2147\pm 280$ seconds to compute the optimal feedback gain. Beyond $p=5$, the unfolding-based method fails to compute the results, while the T-product-based method continues to scale effectively, with computational times increasing progressively but remaining manageable.  These results demonstrate the efficiency and scalability of the T-product-based approach in determining the optimal feedback gains from data informativity for TQR of controlled TPDSs, particularly for large-scale data.

\begin{table}[t]
\centering
\caption{Computational time (in seconds) comparison between T-product-based and unfolding-based methods in computing optimal gains from data informativity for TQR, with mean and standard deviation reported.}
\renewcommand{\arraystretch}{1.5} 
\normalsize
\begin{tabular}{|c|c|c|}
\hline
$p$ & T-product-based  & Unfolding-based  \\ \hline
1 & $1.117 \pm 0.205$  & $0.714 \pm 0.410$  \\ \hline
2 & $2.084 \pm  0.193$ & $2.859 \pm 1.19$ \\ \hline
3 & $3.492 \pm 0.460$ & $6.339 \pm 4.48$ \\ \hline
4 & $6.431  \pm  0.183$ &  $133.54  \pm 12.59 $\\ \hline
5 & $11.076  \pm  2.00$ & $2147.21  \pm  280.4$ \\ \hline
6 & $19.726  \pm  0.661$ & Failed \\ \hline
7 & $38.565  \pm  0.312$ & Failed \\ \hline
8 & $79.029  \pm  0.691$ & Failed \\ \hline
9 & $159.28  \pm  0.546$ & Failed \\ \hline
10 & $339.60  \pm  22.2$ & Failed \\ \hline
\end{tabular}
\label{table:example3}
\end{table}

\section{Conclusion}\label{sec:conclusion}
In this article, we investigated  data-driven control of TPDSs, where the system evolution is governed by the T-product between a third-order dynamic tensor and a third-order state tensor. Specifically, we focused on deriving the necessary and sufficient conditions for determining the data informativity for stabilization by state feedback and TQR of TPDSs. These conditions are critical for ensuring that control strategies can be designed efficiently from the available data. Additionally, we validated the effectiveness of our proposed framework through numerical examples, demonstrating its practical applicability and advantages in system analysis and control. Looking forward, we plan to apply our proposed methods to real-world datasets, including image and video data, to design optimal control strategies for processing these complex data types. This application could significantly enhance various tasks, such as image/video enhancement, compression, or filtering, by providing computationally efficient control mechanisms. Moreover, we aim to extend classical data-driven control frameworks, such as model predictive control, to TPDSs, thereby broadening the range of control strategies available for these systems. Another important direction for future work is to generalize our results to nonlinear and hybrid systems by leveraging higher-order tensors integrated with advanced tensor decompositions, such as tensor trains. This extension could further expand the applicability of our approach to a wider variety of complex systems.

\section*{References}
\bibliographystyle{IEEEtran}
\bibliography{references}

\end{document}